\documentclass[11pt]{article}
\usepackage{graphicx,verbatim,hyperref,amssymb,amsmath,amsthm,amsfonts,enumerate,multicol, xspace,comment,xcolor, algorithmic, algorithm, cite,fullpage, enumitem, bbold, authblk, thm-restate, thmtools, appendix, array}
\usepackage[linesnumbered,boxed, algo2e]{algorithm2e, empheq}

\newtheorem{theorem}{Theorem}
\newtheorem{corollary}{Corollary}[theorem]
\newtheorem{lemma}{Lemma}[section]
\newtheorem{claim}[lemma]{Claim}

\newtheorem{definition}{Definition}

\newcommand{\tsp}{\mathrm{TSP}}
\newcommand{\mtsp}{\mathrm{mTSP}}
\newcommand{\mvTSP}{\mathrm{MV\mbox{-}TSP}}
\newcommand{\MVmTSP}{\mathrm{MV\mbox{-}mTSP}}
\newcommand{\SDTSP}{\mathrm{SD\mbox{-}mTSP}}
\newcommand{\SDMVTSP}{\mathrm{SD\mbox{-}MV\mbox{-}mTSP}}
\usepackage[textsize=tiny,textwidth=2cm,color=green!50!gray]{todonotes} 

\SetKwInput{KwInput}{Input}                
\SetKwInput{KwOutput}{Output}              

\title{Linear Programming based Reductions for Multiple Visit TSP and Vehicle Routing Problems}

\begin{document}

\author[1]{Aditya Pillai\thanks{apillai32@gatech.edu}}
\author[1]{Mohit Singh\thanks{mohit.singh@isye.gatech.edu}}
\affil[1]{H. Milton Stewart School of Industrial and Systems Engineering, Georgia Institute of Technology, Atlanta, USA}

\maketitle

\begin{abstract}
    Multiple TSP ($\mtsp$) is a important variant of $\tsp$ where a set of $k$ salesperson together visit a set of $n$ cities. The $\mtsp$ problem has applications to many real life applications such as vehicle routing. Rothkopf \cite{Rothkopf66} introduced another variant of $\tsp$ called many-visits TSP ($\mvTSP$) where a request $r(v)\in \mathbb{Z}_+$ is given for each city $v$ and a single salesperson needs to visit each city $r(v)$ times and return back to his starting point. A combination of $\mtsp$ and $\mvTSP$ called many-visits multiple TSP $(\MVmTSP)$ was studied by Bérczi, Mnich, and Vincze \cite{many-visit} where the authors give approximation algorithms for various variants of $\MVmTSP$.

    In this work, we show a simple linear programming (LP) based reduction that converts a $\mtsp$ LP-based algorithm to a LP-based algorithm for  $\MVmTSP$ with the same approximation factor. We apply this reduction to improve or match the current best approximation factors of several variants of the $\MVmTSP$. Our reduction shows that the addition of visit requests $r(v)$ to $\mtsp$ does \emph{not} make the problem harder to approximate even when $r(v)$ is exponential in number of vertices. 
    To apply our reduction, we either use existing LP-based algorithms for $\mtsp$ variants or show that several existing combinatorial algorithms for $\mtsp$ variants can be interpreted as LP-based algorithms. This allows us to apply our reduction to these combinatorial algorithms as well achieving the improved guarantees. 
\end{abstract}

\newpage
\section{Introduction}
The traveling salesperson problem ($\tsp$) is a fundamental problem in combinatorial optimization. Given a complete graph on $n$ vertices and non-negative edge costs that satisfy the triangle inequality, the goal is to find a Hamiltonian cycle of minimum cost that visits all vertices. $\tsp$ and its variants have been at the forefront of development of algorithms, in theory as well as practice. From an approximation algorithmic perspective, Christofides \cite{Christofides} and Serdyukov \cite{serdyukov}, gave a $3/2$-approximation algorithm for $\mathrm{TSP}$ which was recently improved to roughly $3/2 - 10^{-36}$ by Karlin, Klein, and Oveis-Gharan \cite{KarlinKG21}.


In this work, we aim to consider the multiple visit versions of $\tsp$ as well as many of its variants. In the multiple visit version of $\tsp$, which we call $\mvTSP$, we are given a requirement $r(v)\in \mathbb{Z}_+$ for each vertex $v\in V$ and the goal is to find a closed walk that visits each vertex exactly $r(v)$ times. Simply, introducing $r(v)$ \emph{copies} of each vertex $v\in V$ and solving the $\tsp$ instance in the corresponding semi-metric, it is easy to see that any $\rho$-approximation for $\tsp$ gives a $\rho$-approximation algorithm for $\mvTSP$. Unfortunately, this reduction is not polynomial time since the input size is logarithmic in $\max_{v}r(v)$ while the algorithm takes time polynomial in $\max_{v} r(v)$. This raises an important question:

 Is there a polynomial-time reduction that implies that a $\rho$-approximation for $\tsp$ gives a $\rho$-approximation for $\mvTSP$?

We ask the same question for variants of the $\tsp$ problem, in particular, for the variants inspired by the classical vehicle routing problem. An extension of TSP is multiple TSP, which we call $\mathrm{mTSP}$, where there is a specified number of salespersons $k$ and the goal is to find $k$ minimum cost cycles such that all vertices are visited by one salesperson. There are several variations depending on whether the salesperson start at a fixed set of depot vertices $D$ and  whether or not all salesperson need to be used. We refer the reader to a survey by Bektas \cite{Bektas} detailing different variants, applications, and several algorithms for $\mathrm{mTSP}$. The multi-visit version of  $\mathrm{mTSP}$ that we call $\MVmTSP$ is again defined similarly: we are given a graph, edge costs, a visit function $r:V\rightarrow \mathbb{Z}_+$, an integer $k$, and possibly a set of $k$ depots and the goal is to find $k$ minimum cost closed walks so that each vertex is visited $r(v)$ times. There are several variants depending on whether there are depots, if all salesperson need be used, and if the demands for each vertex can be satisfied by multiple salesperson. Approximation algorithms for many of these variants were studied recently~\cite{many-visit}. We give a detailed description of the different variants in Section \ref{Section:description}.


\subsection{Our Results and Contributions}

Our main result is to show that there is a polynomial time algorithm that implies any $\rho$-approximation algorithm for $\tsp$ that is \emph{linear programming based} implies a $\rho$-approximation algorithm for $\mvTSP$. By LP-based algorithms, we imply any algorithm that ensures that the objective value of the output solution is at most $\rho$ times the objective of the classical Held-Karp LP relaxation for the $\tsp$. 

\begin{restatable*}{theorem}{THMmv}
\label{theorem:mvMtsp}

    If there is a $\rho$-approximation algorithm for $\tsp$ where the $\rho$ guarantee is towards the optimum solution of LP \eqref{LP:tsp} then there is a $\rho$-approximation algorithm for the $\mvTSP$ problem.
\end{restatable*}

We also show that the above reduction also holds for various $\MVmTSP$ as well. This allows to either obtain improved approximation algorithms or match the best known approximation for many of these variants. We outline the improved approximation in Table \ref{table1}. For many variants of $\mtsp$, previously only combinatorial algorithms were known. We first interpret these combinatorial algorithms as LP-based algorithms by showing that they also give a bound on the integrality gap of the standard Held-Karp style relaxations for these variants. For $\tsp$, this part is analogous to showing the classical Christofides' $\frac32$-approximation algorithm also bounds the integrality gap of the Held-Karp relaxation to within the bound of $\frac32$ as was done by Wolsey \cite{Wolsey} and Shmoys and Williamson \cite{shmoys-willaimson-HK}.


\begin{table}\label{table1}

\begin{tabular}{|c|c|c|c|c|}
    \hline
    Depot Restriction  &  SV  & MV Problem Name&  MV   & MV \\ [0.5ex] 
    &   (previous work) & &   (previous work)  & (this work) \\ [0.5ex] 
    \hline\hline
    $\leq$  & $3/2 + \varepsilon$ \cite{VarDepot-3/2}   & $\MVmTSP_0$ & 2 \cite{many-visit} & $\mathbf{2}$  \\
    $=$ & $2$ \cite{many-visit} & $\MVmTSP_+$ & 3 \cite{many-visit} & $\mathbf{2}$  \\
    One Depot, $k$ salesperson &$3/2$ \cite{Frieze}  & $\SDMVTSP_+$ & 3 \cite{many-visit} & $\mathbf{3/2}$ \\
        \hline
\end{tabular}
\caption{Variants where all $k$ depot/salesperson must be used are marked with $=$ in the first column and variants where some tours may be empty are marked with $\leq$. MV stands for multi-visit and SV stands for single-visit. The numbers in the table are all the best approximation factors. }
\end{table}
There is another variant that includes an additional constraint which requires each vertex to be visited by exactly one salesperson. This means that all vertices are visited $r(v)$ times and all tours are vertex disjoint. We cannot apply our technique to the vertex disjoint tours variants, however we are able to get an approximation for the single depot multi-visit multiple TSP problem using ideas from \cite{many-visit}.
\begin{restatable*}{theorem}{depotDisjoint}
        There is a polynomial time algorithm for the single depot multi-visit $\mtsp$ ($\SDMVTSP$) problem with vertex disjoint tours with an approximation factor of $\frac{7}{2}$.
\end{restatable*}
Next we list a result for one non-depot $\mtsp$ variant where we are required to use all $k$ salespersons which we call the unrestricted $\mtsp_+$ problem. Here we only get an improved result for a single visit variant and are not able to apply the reduction to its multi-visit variant. The previous best approximation factor was $4$ which was shown by Bérczi et al \cite{many-visit}.  
\begin{restatable*}{theorem}{Umtsp}
    There is a polynomial time algorithm for the unrestricted $\mtsp_+$ problem with an approximation factor of $2$. 
\end{restatable*}

\subsubsection{Overview of Technique and Results}\label{Section:Overview}

We now give a overview of the main technique using the example of $\mvTSP$. The same techniques apply to rest of the variants with some additional modifications. We assume that there is a $\rho$-approximation for $\tsp$ which is LP-based. As previously mentioned, the running time of the algorithm for $\mvTSP$ needs to be polynomial in $\max_{v \in V}\log r(v)$ and $n$. A simple exponential time approximation algorithm for $\mvTSP$ is to make $r(v)$ copies of each vertex $v$ and apply a $\mathrm{TSP}$ $\rho$-approximation algorithm to this graph.  
On the other hand, if $\max_{v \in V}r(v)$ was polynomial in $n$ then we would get a $\rho$-approximation polynomial time algorithm. Our main technique is to use a LP relaxation of $\mvTSP$ to fix certain edges in our solution (without taking a loss in the objective) and construct a new instance where the visit requirement of each vertex is polynomial. We then apply the simple reduction to $\mathrm{TSP}$ we described above. We note that our reduction relies on the connection between the LP relaxations of $\mvTSP$ and $\mathrm{TSP}$: the LP relaxations only differ in that $\mathrm{TSP}$ requires every vertex has degree $2$ while $\mvTSP$ has degree $2r(v)$. As a result our reduction is limited in that we cannot use any algorithm for $\mathrm{TSP}$ but only an algorithm that has a guarantee towards the LP relaxation of $\mathrm{TSP}$. 

For many variants that we consider, a challenge then arises. Several of the existing algorithms give approximation guarantees as compared to the integral solution and do not compare the algorithm's solution to the cost of the linear programming relaxation. For these problems, we first formulate a Held-Karp style LP relaxation and either show that an existing algorithm has an approximation guarantee relative to the LP value or give a new algorithm which has a guarantee towards the LP value. For this we use characterizations of matroid intersection polytope which we apply to constrained spanning trees and related problems in Section~\ref{sec:tree}.

To illustrate our technique in more detail, let us return to the $\mvTSP$ problem. All missing proofs for this result will appear in Appendix \ref{Section:TSP-proofs} and also appear more generally in our general framework in Section \ref{Section:framework}.
We use the following standard Held-Karp LP relaxation for $\tsp$ which we call LP \eqref{LP:tsp},
\begin{align}
\text{minimize   } &\sum_{e \in E}c_ex_e \label{LP:tsp} \\
\text{s.t.   } & x(\delta(v)) = 2 &\forall v \in V \nonumber \\
 & x(\delta(S)) \geq 2 &\forall S \subseteq V \nonumber \\
 & 0 \leq x_e \leq 1 & \forall e \in E. \nonumber
\end{align}

The following linear program is a relaxation for $\mvTSP$ that generalizes the Held-Karp LP,
\begin{align}
\text{minimize   } &\sum_{e \in E}c_ex_e \label{LP:tsp-multi} \\
\text{s.t.   } & x(\delta(v)) = 2r(v) &\forall v \in V \nonumber \\
 &  x(\delta(S))  \geq 2 &\forall S \subset V \nonumber \\
 &  x_e \geq 0 & \forall e  \in  \nonumber E .
\end{align}
We need the following lemma which shows that the simple reduction from $\mvTSP$ to $\tsp$ is polynomial time when the visit requests $r(v)$ are polynomial in $n$. Moreover, the reduction maintains the approximation factor of the LP relaxation based algorithm used for $\tsp$. The reduction basically relies on replacing each vertex with $r(v)$ copies and then applying the LP-based algorithm. 
\begin{lemma} \label{lemma:smallR-tsp}
    If there is a $\rho$-approximation algorithm for $\tsp$ where the $\rho$ guarantee is towards the optimum solution of LP \eqref{LP:tsp} then there exists an algorithm that given a solution $y$ to LP \eqref{LP:tsp-multi} outputs a closed walk $T: E \to \mathbb{Z}$ satisfying
    $\sum_{e \in C}T(e)c_e \leq \rho \sum_{e \in E}c_ey_e$ with a run-time polynomial in $\max_{v \in V}r(v)$ and $n$. 
\end{lemma}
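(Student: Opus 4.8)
The plan is to make the ``one copy per visit'' reduction explicit and to certify its cost against the fractional optimum by exhibiting a single feasible point of the Held--Karp LP \eqref{LP:tsp} on the blown-up graph. Concretely, I would build the graph $G'$ whose vertices are $r(v)$ copies $v_1,\dots,v_{r(v)}$ of each $v\in V$, with costs $c'_{u_iv_j}=c_{uv}$ for $u\neq v$ and $c'_{v_iv_j}=0$ within a cloud; the triangle inequality for $c$ makes $c'$ a semi-metric, so the LP-based $\rho$-approximation for $\tsp$ applies on $G'$ and returns a Hamiltonian cycle $H$ with $\sum c'(H)\le \rho\, Z^\star$, where $Z^\star$ is the optimum of LP \eqref{LP:tsp} on $G'$. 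Projecting each copy $v_i\mapsto v$ turns $H$ into a closed walk $T$ that visits every $v$ exactly $r(v)$ times and has $\sum_e T(e)c_e=\sum c'(H)$, where consecutive copies of the same vertex project to zero-cost no-ops. The whole lemma therefore reduces to the single inequality $Z^\star\le \sum_e c_e y_e$, which I will establish by constructing a feasible $x'$ from $y$ of equal cost.

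The construction I would use is the \emph{uniform split} $x'_{u_iv_j}=y_{uv}/\bigl(r(u)r(v)\bigr)$ with no intra-cloud edges. The cost identity $\sum_{e'} c'_{e'}x'_{e'}=\sum_e c_e y_e$ and the degree constraints are immediate: the edges incident to a copy $u_i$ total $\sum_{v\neq u} r(v)\cdot y_{uv}/\bigl(r(u)r(v)\bigr)=y(\delta(u))/r(u)=2$. The upper bound $x'_{e'}\le 1$ is the first place the subtour constraints of \eqref{LP:tsp-multi} enter: applying $y(\delta(S))\ge 2$ to the pair $S=\{u,v\}$ gives $2r(u)+2r(v)-2y_{uv}\ge 2$, hence $y_{uv}\le r(u)+r(v)-1\le r(u)r(v)$ since $(r(u)-1)(r(v)-1)\ge 0$, which is exactly what makes $x'_{u_iv_j}\le 1$.

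The main obstacle, and the step I would spend the most care on, is verifying $x'(\delta(S'))\ge 2$ for \emph{every} $\emptyset\neq S'\subsetneq V'$, including sets that contain only some of the copies of a cloud. Writing $t_v=s_v/r(v)\in[0,1]$ for the fraction of $v$'s copies taken by $S'$ and using that every copy has degree $2$, this is equivalent to $f(t):=\sum_v r(v)t_v-\sum_{\{u,v\}} y_{uv}t_ut_v\ge 1$. The key observation is that $f$ is affine in each coordinate $t_w$, with slope $r(w)-\sum_v y_{wv}t_v$, so I can push each fractional coordinate to $0$ or $1$ without increasing $f$. When all coordinates are integral, $f=\tfrac12 y(\delta(S))\ge 1$ for the corresponding proper nonempty $S\subset V$ by the subtour constraints of \eqref{LP:tsp-multi}; and the two degenerate configurations that can arise during the pushing---$S'$ being a set of copies of a single cloud, or $V'$ minus such a set---give $f=s_w\ge 1$ and $f=r(w)-s_w\ge 1$ respectively by direct computation. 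This reduces the general fractional case to the cloud-cut case, where the hypotheses on $y$ apply verbatim.

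Finally I would dispatch the routine points. Since $|V'|=\sum_v r(v)$ is polynomial in $n$ and $\max_v r(v)$, constructing $G'$, running the $\tsp$ approximation, and projecting all run within the claimed time bound. Combining the pieces, $Z^\star\le \sum_{e'}c'_{e'}x'_{e'}=\sum_e c_e y_e$, so $\sum_e T(e)c_e=\sum c'(H)\le\rho\,Z^\star\le\rho\sum_e c_e y_e$, as required. The argument hinges entirely on the fact that \eqref{LP:tsp} and \eqref{LP:tsp-multi} differ only in the degree right-hand side, which is precisely what lets the uniform split land inside \eqref{LP:tsp}.
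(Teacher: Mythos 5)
Your overall route is the same as the paper's: blow each vertex up into $r(v)$ copies, spread $y$ uniformly as $x'_{u_iv_j}=y_{uv}/(r(u)r(v))$, verify feasibility for LP \eqref{LP:tsp} on the blown-up graph, run the LP-based $\tsp$ algorithm, and project back. Your verification of the cut constraints via the multilinear function $f(t)$ and pushing coordinates to $\{0,1\}$ is just a repackaging of the paper's exchange/induction on the number of split clouds, and your derivation of $x'_{e'}\le 1$ from the two-vertex subtour constraint is a valid (arguably cleaner) alternative to the paper's contradiction argument at the copy level. So the structure is fine.

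The one genuine flaw is your choice $c'_{v_iv_j}=0$ for intra-cloud edges. The paper works with a semi-metric in which $c_{vv}$ need not be zero, and extra visits realized by loops must be paid for at cost $c_{vv}$ (a loop contributes $2$ to the degree and hence one visit). The Hamiltonian cycle $H$ returned by the black-box algorithm may well use intra-cloud edges; under projection such an edge must become a loop at $v$ (if you instead ``merge'' the two consecutive copies into a single occurrence of $v$, the walk visits $v$ fewer than $r(v)$ times and $d_T(v)<2r(v)$). Consequently $\sum_e T(e)c_e = \sum c'(H) + \sum_v (\#\text{intra-cloud edges of }v\text{ in }H)\cdot c_{vv}$, and your claimed identity $\sum_e T(e)c_e=\sum c'(H)$ fails whenever some $c_{vv}>0$. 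The fix is exactly what the paper does: assign $c'_{v_iv_j}=c_{vv}$ (the copy of the loop), which still satisfies the triangle inequality, leaves your feasible point $x'$ and the cost identity $\sum_{e'}c'_{e'}x'_{e'}=\sum_e c_e y_e$ unchanged (since $x'$ puts no mass on intra-cloud edges), and makes the projection cost-preserving. With that correction your argument goes through.
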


Now we show how to use Algorithm in Lemma~\ref{lemma:smallR-tsp} for a general instance where $r$ is not polynomially bounded. This algorithm (Algorithm~\eqref{Algorithm:MVTSP}) solves LP \eqref{LP:tsp-multi} and fixes edges in the solution that are integrally set and reduce the visit requests accordingly. Finally, the reduced visits are polynomial in size so we can then apply Lemma~\ref{lemma:smallR-tsp}. One has to carefully verify that the \emph{reduced} linear programming solution is a feasible solution to the LP relaxation for the reduced instance which can be done by verifying the constraints carefully. 
\begin{algorithm}[H]
\caption{$\mvTSP$ Reduction Algorithm}
\DontPrintSemicolon
\label{Algorithm:MVTSP}
\KwInput{$G=(V,E), c: V \times V \to \mathbb{R}_{\geq 0},r: V \to \mathbb{Z}$}
\setcounter{AlgoLine}{0}
\KwOutput{an integral solution to LP \eqref{LP:tsp-multi}}

Solve LP \eqref{LP:tsp-multi} to get solution $x^*$ .\;

For all edges $e$ let $\tilde{x}_e = x_e - 2k_e$ such that $k_e = 0$ if $x_e \leq 4$ and otherwise $k_e$ is set so that $2 \leq \tilde{x}_e < 4$ and $k_e \in \mathbb{Z}$. Define a function $\tilde{r}: V \to \mathbb{Z}$ where $\tilde{r}(v) = r(v) - \sum_{e \in \delta(v)}k_e$.\;

Use Lemma \eqref{lemma:smallR-tsp} with solution $\tilde{x}$ on instance $G, \tilde{r}$ to get $T:E \to \mathbb{Z}$.\;

Increase the number of times each edge is used in the previous step by $2k_e$ and return the resulting solution.\;
\end{algorithm}
The next two claims show that Step 3 of the Algorithm \eqref{Algorithm:MVTSP}  is valid.  

\begin{claim} \label{claim:rbound-tsp}
The new visit function $\tilde{r}$ satisfies $1 \leq \tilde{r}(v) \leq 2n$ for all $v \in V$.
\end{claim}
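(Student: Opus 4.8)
The plan is to first translate the claim about $\tilde{r}$ into a statement about the $\tilde{x}$-degree of $v$. Summing the definition $\tilde{x}_e = x_e - 2k_e$ over all $e \in \delta(v)$ and using the equality constraint $x(\delta(v)) = 2r(v)$ of LP \eqref{LP:tsp-multi}, I would obtain
\[
\tilde{x}(\delta(v)) = x(\delta(v)) - 2\sum_{e \in \delta(v)} k_e = 2r(v) - 2\sum_{e \in \delta(v)} k_e = 2\tilde{r}(v),
\]
so it suffices to prove $2 \le \tilde{x}(\delta(v)) \le 4n$ (integrality of $\tilde{r}(v)$ is immediate, since $r(v)$ and each $k_e$ lie in $\mathbb{Z}$). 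The key structural fact I would record at the outset is the per-edge bound $0 \le \tilde{x}_e \le 4$: when $x_e \le 4$ we have $k_e = 0$ and $\tilde{x}_e = x_e \in [0,4]$, while when $x_e > 4$ the integer $k_e$ is chosen precisely so that $\tilde{x}_e \in [2,4)$.

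For the upper bound I would invoke that the graph is complete, so $v$ has exactly $n-1$ incident edges. Combined with $\tilde{x}_e \le 4$ on each of them, this yields $\tilde{x}(\delta(v)) \le 4(n-1)$, hence $\tilde{r}(v) \le 2(n-1) \le 2n$, as required.

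For the lower bound I would split on whether any edge at $v$ was reduced. If some $e \in \delta(v)$ has $x_e > 4$, then by the per-edge bound $\tilde{x}_e \ge 2$, and since every other term $\tilde{x}_{e'} \ge 0$, we get $\tilde{x}(\delta(v)) \ge 2$ at once. Otherwise every edge at $v$ satisfies $x_e \le 4$, so $k_e = 0$ for all $e \in \delta(v)$ and $\tilde{x}(\delta(v)) = x(\delta(v)) = 2r(v) \ge 2$, using $r(v) \ge 1$. In either case $\tilde{r}(v) \ge 1$. The one genuinely delicate point is exactly this lower bound: subtracting $2k_e$ from the edges need not preserve the degree/cut constraints of LP \eqref{LP:tsp-multi}, so one cannot simply quote $\tilde{x}(\delta(v)) \ge 2$ from feasibility of $\tilde{x}$; the case analysis above, which relies on the fact that a reduced edge always retains at least $2$ units, is what circumvents this. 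The remaining steps are direct calculations.
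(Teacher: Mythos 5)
Your proof is correct and follows essentially the same route as the paper's: both derive the identity $\tilde{r}(v) = \tfrac{1}{2}\tilde{x}(\delta(v))$, get the lower bound by casing on whether some edge at $v$ has $x_e > 4$ (using that reduced edges retain $\tilde{x}_e \geq 2$), and get the upper bound from $\tilde{x}_e \leq 4$ on the at most $n$ edges incident to $v$. Your version simply spells out the details the paper leaves implicit.
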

\begin{proof}
    For $v \in V$ we have,
    \begin{align*}
        \tilde{r}(v) &= r(v) - \sum_{e \in \delta(v)}k_e \\
        &= r(v) - \sum_{e \in \delta(v)} \frac{x_e - \tilde{x}_e}{2} = \frac{1}{2}\sum_{e \in \delta(v)}\tilde{x}_e.
    \end{align*}
    If all $e \in \delta(v)$ satisfy $x_e \leq 4$ then $\tilde{r}(v) = r(v) \geq 1$. Otherwise, the lower bound follows since for $x_e > 4$ we have $\tilde{x}_e \geq 2$. The upper bound follows since $\tilde{x}_e \leq 4$ for all $e \in E$.
\end{proof}
\begin{claim} \label{claim:feasible-TSP}
    The solution $\tilde{x}$ is feasible for LP \eqref{LP:tsp-multi} with graph $G$ and $\tilde{r}$.
\end{claim}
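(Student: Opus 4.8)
The plan is to verify the three families of constraints of LP \eqref{LP:tsp-multi} for the pair $(\tilde{x}, \tilde{r})$ one at a time, exploiting the fact that the definition $\tilde{x}_e = x_e - 2k_e$ shifts each coordinate by an even integer that is absorbed exactly into the reduced requirement $\tilde{r}$. Since $x^*$ is feasible for LP \eqref{LP:tsp-multi} with $r$, the whole task is to check that this bookkeeping preserves feasibility with $\tilde{r}$.

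First I would handle the degree constraints. For each $v \in V$, summing the definition over $e \in \delta(v)$ gives $\tilde{x}(\delta(v)) = x(\delta(v)) - 2\sum_{e \in \delta(v)}k_e = 2r(v) - 2\sum_{e \in \delta(v)}k_e = 2\tilde{r}(v)$, where the middle equality uses feasibility of $x^*$ and the last uses the definition of $\tilde{r}$. This is the same telescoping identity already recorded in the proof of Claim \ref{claim:rbound-tsp}, so nothing new is required here.

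Next I would dispatch nonnegativity by the case split built into the definition of $k_e$. If $x_e \leq 4$ then $k_e = 0$ and $\tilde{x}_e = x_e \geq 0$; otherwise $k_e$ is chosen so that $2 \leq \tilde{x}_e < 4$, hence $\tilde{x}_e \geq 2 > 0$. The point I would isolate for the following step is the implication that whenever $k_e \geq 1$ we automatically have $\tilde{x}_e \geq 2$.

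The main obstacle is the subtour/cut constraint $\tilde{x}(\delta(S)) \geq 2$, because $\tilde{x}(\delta(S)) = x(\delta(S)) - 2\sum_{e \in \delta(S)}k_e$ and the subtracted term can be large, so one cannot simply invoke $x(\delta(S)) \geq 2$. The key idea is a dichotomy on the edges crossing $S$. If every $e \in \delta(S)$ has $k_e = 0$, then $\tilde{x}(\delta(S)) = x(\delta(S)) \geq 2$ directly from feasibility of $x^*$. Otherwise there is some $e \in \delta(S)$ with $k_e \geq 1$, and by the observation above $\tilde{x}_e \geq 2$; since every other crossing edge contributes $\tilde{x}_{e'} \geq 0$, we again obtain $\tilde{x}(\delta(S)) \geq \tilde{x}_e \geq 2$. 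Either way the cut constraint holds, which is exactly why the threshold in the definition of $k_e$ is placed at $4$: a nonzero reduction leaves at least two units of fractional weight on that edge, enough to satisfy the cut on its own. Assembling the three verifications completes the claim.
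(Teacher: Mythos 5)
Your proposal is correct and follows essentially the same argument as the paper: the telescoping identity for the degree constraints, nonnegativity from the case split in the definition of $k_e$, and the dichotomy on whether some crossing edge has $x_e > 4$ (equivalently $k_e \geq 1$), in which case that single edge alone contributes $\tilde{x}_e \geq 2$ to the cut. No substantive difference.
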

\begin{proof}
    We have $\tilde{x} \geq 0$ and $\sum_{e \in \delta(v)}\tilde{x}_e = \sum_{e \in \delta(v)} x_e - 2k_e = 2(r(v) - \sum_{e \in \delta(v)}k_e ) = 2\tilde{r}(v)$. For any set $S \subset V $, if $x_e \leq 4$ for all $e \in \delta(S)$ then $\tilde{x}(\delta(S)) = x(\delta(S)) \geq 2$. Otherwise if there is an edge $e \in \delta(S)$ such that $x_e > 4$ we get $\tilde{x}(\delta(S)) \geq \tilde{x}_e \geq 2$ by definition of $\tilde{x}$. 
\end{proof}
Now we get the following theorem which is our main result.
\THMmv

As a corollary of Theorem \ref{theorem:mvMtsp} we get the following by applying the work of Karlin, Klein, and Oveis-Gharan \cite{KarlinKG22-LP}.
\begin{corollary}
 There is an approximation algorithm for the $\mvTSP$ problem with an approximation factor less than $\frac{3}{2} - 10^{-36}$.
\end{corollary}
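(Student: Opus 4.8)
The plan is to prove the theorem by analyzing Algorithm~\ref{Algorithm:MVTSP} and assembling the structural facts already isolated in Claims~\ref{claim:rbound-tsp} and~\ref{claim:feasible-TSP}. The first thing I would establish is that LP~\eqref{LP:tsp-multi} is a genuine relaxation of $\mvTSP$, so that its optimum lower-bounds $\mathrm{OPT}$. Given any closed walk visiting each $v$ exactly $r(v)$ times, its edge-multiplicity vector has degree $2r(v)$ at every $v$ and, being connected and meeting both sides of any nontrivial cut $S$ (recall every $r(v)\ge 1$), crosses $\delta(S)$ an even number of times that is at least $2$; hence it satisfies all constraints of~\eqref{LP:tsp-multi}. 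Therefore $\sum_{e} c_e x^*_e \le \mathrm{OPT}$, where $x^*$ is the optimal LP solution computed in Step~1.

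Next I would feed the reduced instance to Lemma~\ref{lemma:smallR-tsp}. By Claim~\ref{claim:feasible-TSP}, the stripped vector $\tilde{x}$ is feasible for~\eqref{LP:tsp-multi} on $(G,\tilde{r})$, and by Claim~\ref{claim:rbound-tsp} we have $1 \le \tilde{r}(v) \le 2n$. The upper bound makes $\max_v \tilde{r}(v)$ polynomial in $n$, so Lemma~\ref{lemma:smallR-tsp} runs in polynomial time and returns a closed walk $T$ with $\sum_e T(e)c_e \le \rho \sum_e c_e \tilde{x}_e$; the lower bound $\tilde{r}(v)\ge 1$ ensures $T$ touches every vertex, so its underlying multigraph is connected and spans $V$. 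The output of Step~4 places $T(e)+2k_e$ copies on each edge, and I would bound its cost by
\[
\sum_{e}\bigl(T(e)+2k_e\bigr)c_e \;\le\; \rho\sum_{e}c_e\tilde{x}_e + 2\sum_{e}k_e c_e .
\]
Substituting $\tilde{x}_e = x^*_e - 2k_e$ turns this into $\rho\sum_e c_e x^*_e - 2(\rho-1)\sum_e k_e c_e \le \rho\sum_e c_e x^*_e \le \rho\,\mathrm{OPT}$, where the middle step uses $\rho \ge 1$ together with $k_e,c_e \ge 0$.

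Finally I would verify that the returned multigraph is a legitimate $\mvTSP$ tour. Adding $2k_e$ copies of each $e\in\delta(v)$ raises the degree of $v$ by $2\sum_{e\in\delta(v)}k_e = 2\bigl(r(v)-\tilde{r}(v)\bigr)$, so every vertex ends with degree exactly $2r(v)$ (i.e.\ is visited $r(v)$ times) and all degrees remain even; since the added edges only connect vertices already spanned by $T$, the result stays connected, hence is Eulerian and decomposes into one closed walk realizing the required visits. The step I expect to demand the most care is the cost accounting in the displayed inequality: the reduction succeeds only because the integral cost $2\sum_e k_e c_e$ of re-inserting the fixed edges is dominated by the $\rho$-scaled fractional savings, which holds exactly when $\rho\ge 1$, and the factor of two in $2k_e$ is precisely what keeps each $\tilde{x}_e\ge 0$ and every vertex degree even. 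Losing track of this parity-and-integrality bookkeeping is the most likely way for the argument to break.
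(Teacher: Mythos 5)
Your write-up is a correct and essentially verbatim reconstruction of the paper's proof of the \emph{reduction}, i.e., of Theorem~\ref{theorem:mvMtsp}: the observation that LP~\eqref{LP:tsp-multi} relaxes $\mvTSP$, the appeal to Claims~\ref{claim:rbound-tsp} and~\ref{claim:feasible-TSP}, the cost accounting $\rho\sum_e c_e x^*_e - 2(\rho-1)\sum_e k_e c_e \le \rho\sum_e c_e x^*_e$, and the parity/connectivity check for the returned multigraph all match the argument in Appendix~\ref{Section:TSP-proofs}. However, the statement you were asked to prove is the \emph{corollary}, whose entire content beyond Theorem~\ref{theorem:mvMtsp} is the concrete numerical factor $\frac{3}{2}-10^{-36}$. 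Your proof never exhibits a $\rho$-approximation algorithm for $\tsp$ with $\rho < \frac{3}{2}-10^{-36}$ whose guarantee is measured against the optimum of LP~\eqref{LP:tsp}; without such an input, the reduction yields only the conditional statement ``LP-based $\rho$ for $\tsp$ implies $\rho$ for $\mvTSP$,'' not an unconditional sub-$\frac{3}{2}$ approximation.

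The paper closes this gap in one line by citing Karlin, Klein, and Oveis Gharan~\cite{KarlinKG22-LP}, whose result bounds the integrality gap of the Held--Karp relaxation below $\frac{3}{2}$ algorithmically, i.e., their guarantee is relative to LP~\eqref{LP:tsp} and not merely to the integral optimum. This is precisely the hypothesis your own reduction requires --- you correctly insist throughout that the $\tsp$ algorithm must be LP-based --- so the missing step is to name such an algorithm and verify that its guarantee is of the required LP-relative form. (A plain $\rho$-approximation compared only to the integral $\tsp$ optimum, such as the original statement of the Christofides--Serdyukov bound before Wolsey's LP analysis, would not suffice here.)
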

 We also apply this reduction to reduce different variants of $\MVmTSP$ to $\mathrm{mTSP}$. One variation of $\mathrm{mTSP}$ is single depot multiple TSP, $\SDTSP$, which was studied by Frieze \cite{Frieze}. For $\SDTSP$ we are given a graph with edge costs, one depot vertex, and an integer $k$ and the goal is to find $k$ cycles that contain the depot so that all non-depot vertices are contained in exactly one cycle. In this paper we also introduce the multi-visit version of this problem called $\SDMVTSP$ which is defined similarly as $\SDTSP$ except each non-depot vertex needs to be visited $r(v)$ times. We also apply our technique to reduce $\SDMVTSP$ to $\SDTSP$. 

\subsection{Related Work}
The variant of $\tsp$ when a visit request $r(v)$ is given for each vertex $v$ is called $\mvTSP$. There is also a variant of $\mvTSP$ called path $\mvTSP$ where instead of a closed walk the goal is to find a walk between given vertices $s \neq t$ so that  vertices $v \notin \{s, t\}$ are visited $r(v)$ times. Bérczi, Mnich, Vincze \cite{BercziMV22} gave a $\frac{3}{2}$  for both path $\mvTSP$ and $\mvTSP$. 

The variant of $\tsp$ where multiple salesperson are used is usually referred to as $\mtsp$. Frieze shows a $\frac{3}{2}$-approximation for a variant where $k$ salesperson are required to start a fixed depot vertex vertex $v_1$. Frieze's algorithm  generalizes the  Christofides-Serdyukov algorithm \cite{Christofides, serdyukov} for $\tsp$. The $\mtsp$ problem is a relaxation of the vehicle routing problem (VRP). In VRP a set of $k$ vehicles need to visit a set of customers with known demands while starting and ending at a fixed depot vertex. Further, the set of vehicles have a vehicle capacity which limits the total demand each vehicle can serve. If the vehicle capacity is sufficiently large so that the vehicles are not restricted by the demands then VRP is equivalent to $\mtsp$. Thus there are several works for VRP that apply ideas from $\tsp$ algorithms such as a paper by Christofides, Mingozzi, Toth \cite{christofides-VRP} where the authors give exact VRP algorithms based on finding min cost trees. 

A different version of $\mtsp$ is when the different salesperson are required to start from different depot vertices. Given a set of $k$ depot vertices the goal is to find at most $k$ minimum cost cycles such that each vertex contains exactly one depot and all vertices are contained in exactly one cycle. Rathinam, Sengupta, and Darbha \cite{rathinam2007resource} showed a $2$-approximation algorithm for this problem which was then improved to $2 - \frac{1}{k}$ by Xu, Xu, and Rodrigues \cite{XuR15-fixed}. Xu and Rodrigues \cite{XuR15-fixed} showed a $\frac{3}{2}$-approximation when the number of depots $k$ is constant and very recently Deppert, Kaul, and Mnich \cite{VarDepot-3/2} showed a $\frac{3}{2}$-approximation for arbitrary $k$.

The $\mtsp$ problem with depots can be generalized further when $m$ depots are available and there are $k$ salesperson satisfying $k \leq m$. Both Malik, Rathinam, and Darbha \cite{malik2007approximation} and Carnes and Shmoys \cite{carnes2011primal} gave $2$-approximations for this problem. Later Xu and Rodrigues \cite{xu2017extension} gave a $(2 - \frac{1}{2k})$-approximation. The algorithm in \cite{XuR15-fixed} can be adapted to this case to get a $\frac{3}{2}$-approximation when $m$ is constant. 

Bérczi, Mnich, and Vincze \cite{many-visit} considered various problems that have both the constraints of $\mtsp$ and $\mvTSP$ which are referred to as $\MVmTSP$.  They consider $8$ different variants of $\MVmTSP$ and show equivalencies among some of the $8$ variants. Additionally, they give constant factor approximations for the different variants using many ideas from previous $\tsp$ algorithms such as tree doubling.

\section{Preliminaries} 
A graph $G = (V, E)$ is defined on vertex set $V$ and edge set $E$ which we will always take to be the complete graph in this paper. For sets $A, B \subseteq V$ we denote by $E(A, B) \subseteq E$ edges that have one endpoint in $A$ and one endpoint in $B$. We use $E(A)$ as a shorthand for $E(A, A)$ and $\delta(A) = E(A, V - A)$ meaning $\delta(A)$ is the set of edges with exactly one endpoint in $A$. For a single vertex $v$ we write $\delta(v)$ instead of $\delta(\{v\})$ to denote its set of neighbors. The degree of a vertex $v$ is denoted by $d(v)$ which is the number of edges incident to that vertex meaning $d(v) = |\delta(v)|$. We note that any loops on a vertex contribute $2$ to the degree. Additionally for a set of edges $T \subseteq E$, we use $d_T(v)$ to denote the number of edges in $T$ that contain $v$. Throughout the paper we use LPs whose variables correspond to edges of the graph and for LP variable $x \in \mathbb{R}^{|E|}$ we use $x(T) = \sum_{e \in T}x_e$ for all $T \subseteq E$.

We also use the notion of a matroid in this paper. A matroid $\mathcal{M}$ is defined by a ground set $E$ and a collection of independent sets $\mathcal{I} \subseteq 2^{E}$ satisfying three properties 
\begin{enumerate}
\item $\emptyset \in \mathcal{I}$ 
\item If $A \in \mathcal{I}$ then $B \in \mathcal{I}$ for all $B \subseteq A$ 
\item If $A, B \in \mathcal{I}$ with $|A| < |B|$ then there exists $x \in B - A$ so that $A \cup \{x\} \in \mathcal{I}$.
\end{enumerate}
 An independent set of maximum cardinality is called a base. We use two specific matroids in this paper: partition matroids and graphic matroids. A partition matroid  is defined by a partition of the ground set $E = P_1 \dot\cup \dots \dot\cup P_k$ each with a capacity $c_i \leq |P_i|$ and a set $S \in \mathcal{I}$ if $|S \cap P_i| \leq c_i$ for all $i = 1,\ldots,k$. A graphic matroid is defined on a graph $G$ with the set of edges as the ground set and a set $T \subseteq E$ is independent if the set of edges $T$ is acyclic in $G$. All matroids $\mathcal{M}$ have a rank function $r: 2^E \to \mathbb{Z}$ which is defined as  $r(S) = \max_{A \subseteq S} \{ |A| | A \in \mathcal{I} \} $. It is well known that the convex hull of indicator vectors of independent sets in a matroid is described by $\{ \mathbf{x} \geq 0 \in R^{|E|} | x(S) \leq r(S) \forall S \subseteq E \}$ and for matroids $\mathcal{M}_1 = (E, \mathcal{I}_1), \mathcal{M}_2 = (E, \mathcal{I}_2)$ with rank functions $r_1, r_2$ the convex hull of the indicator vectors of $\mathcal{I}_1 \cap \mathcal{I}_2$ is given by $\{ \mathbf{x} \geq 0 \in R^{|E|} | x(S) \leq \min(r_1(S),r_2(S))  \forall S \subseteq E \}$. Moreover, both the matroid and matroid intersection polytopes are TDI (totally dual integral). We refer the reader to \cite{schrijver-book} for more details on matroids and matroid polytopes.

\section{Problem Description} \label{Section:description}
In this section we formally describe each of the problems and the requirements of feasibility. We use the same names and notations for the problem names as \cite{many-visit}. Throughout the paper we will use $n$ to denote the number of vertices in the graph that is the input to each problem and $c: E \times E \to \mathbb{R}_{\geq 0}$ to denote the cost function. The cost function $c$ is a semi-metric meaning it is symmetric and satisfies triangle inequality, but does not satisfy $c_{vv} = 0$. In particular, we have the following 
\begin{enumerate}
    \item Symmetry: $c_{uv} = c_{vu}$ for all $u, v \in V$
    \item Triangle Inequality: $c_{uv} \leq c_{ux} + c_{xv}$ for all $u, v, x \in V$. 
\end{enumerate}
We observe that the triangle inequality implies for all $u,v \in V$ $c_{vv} \leq 2c_{uv}$. This means our algorithms are allowed to use loops to satisfy the visit requirement and will pay for those loops. We recall that a loop is counted twice in the degree of a vertex so taking a loop on a vertex counts as one visit to that vertex. In many single visit variants such as $\mathrm{TSP}$ using loops violates feasibility so for most single visit variants we may assume $c_{vv} = 0$ and $c$ is a metric. There are a few single visit variants that use loops and this will be specified below.

First we describe the simpler variants. 
\begin{enumerate}
    \item We call the standard traveling salesperson problem as $\tsp$. For $\tsp$, we are given a complete graph $G$ on a vertex set $V$ and the goal is to find a Hamiltonian cycle of minimum cost. 
     \item We call the multi-visit TSP problem as $\mvTSP$. In  $\mvTSP$, we are given a complete graph $G$ on a vertex set $V$ and a visit function $r: V \to \mathbb{Z}$ satisfying $r(v) \geq 1$ for all $v \in V$. The goal is to find a minimum cost closed walk such all vertices are visited exactly $r(v)$ times.
    \item We call the multiple TSP problem as $\mtsp$. There are $4$ variants of $\mtsp$ depending on $2$ parameters.

    \begin{enumerate}[label*=\arabic*.]
    \item The first is whether or not all salespersons are used. In $\mathrm{mTSP}_+$, we are given a complete graph $G$ on a vertex set $V$ and a number $k \geq 1$ and the goal is to find exactly $k$ cycles such that every vertex is contained in exactly one cycle. In $\mathrm{mTSP_0}$ we have the same inputs and the goal is to find at most $k$ cycles so that every vertex is contained in exactly one cycle.
    \item The second is whether or not we have depot vertices. If there are depot vertices then any pair of depots cannot be in the same cycle. If there are no depots then the problem is called \emph{unrestricted}.
    \end{enumerate}
    From these two parameters we get the following $4$ problems: unrestricted $\mathrm{mTSP}_+$, unrestricted $\mathrm{mTSP}_0$, $\mathrm{mTSP}_+$, and $\mathrm{mTSP}_0$. Both $\mathrm{mTSP}_+$, and $\mathrm{mTSP}_0$ are allowed to use loops.
    
    \end{enumerate}
In this paper we mainly study variants that are a mix of the multiple TSP and multi-visit TSP problems which we call  $\MVmTSP$ For $\MVmTSP$, there are $8$ variants of problems that arise from three parameters. The two parameters from $\mtsp$ carry over which are whether or not all salesperson are used and whether or not there are depot vertices. In addition to these two parameters, there is also a parameter that imposes a restriction on whether the visit requirements for a vertex need to satisfied by one salesperson or if different salesperson in total can satisfy the visit requirements. Variants where the different tours are required to be vertex disjoint are called \emph{vertex disjoint} tours and variants where different tours are allowed to intersect are called \emph{arbitrary}. Using these $3$ parameters we get $8$ problems some of which include unrestricted $\mathrm{mTSP}_+$ with arbitrary tours, $\mathrm{mTSP}_+$ with vertex disjoint tours, and $\mathrm{mTSP}_0$ with vertex disjoint tours.

We note that some of the variants among all $8$ possibilities are equivalent, some can be reduced to another in only one direction, and others cannot be reduced to each other in either direction. We refer the reader to \cite{many-visit} which explains the connections between the problems in detail and gives examples. We now describe the problems that we consider in this paper.
\begin{enumerate}
    \item In $\mathrm{MV\mbox{-}mTSP_+}$ with arbitrary tours, we are given a complete graph $G$ on a vertex set $V$ and a subset of depots $D \subseteq V$ with $|D| = k$. The goal is to find exactly $k$ closed walks such that all non-depot vertices $v$ are visited $r(v)$ times and each closed walk contains exactly one depot. 
    \item In unrestricted $\mtsp_+$ with arbitrary tours, we are given a complete graph $G$ on a vertex set $V$ and an integer $k$. The goal is to find $k$ cycles so that every vertex is contained in one cycle. We note that here a single loop on a vertex is a valid cycle.
\end{enumerate}
Next we describe variants where there is one depot, but we have $1 \leq k \leq n-1$ salesperson available. 
\begin{enumerate}
    \item In $\SDTSP_+$, we are given a complete graph and an integer $1 \leq k \leq n-1$. The goal is to find exactly $k$ cycles containing at least $3$ vertices so that all cycle contain the depot vertex $v_1$ and every other vertex $v \neq v_1$ is contained in exactly one cycle. We note that if cycles with two vertices were allowed, meaning $v_1, v, v_1$ is valid, then Frieze \cite{Frieze} shows a reduction to the problem where each cycle has at least $3$ vertices. 
    \item In $\SDMVTSP_+$ with arbitrary tours, we are given a complete graph, an integer $1 \leq m \leq n-1$, and a visit function $r:V - v_1 \to \mathbb{Z}$. The goal is to find exactly $k$ closed walks starting at the depot vertex $v_1$ so that all non-depot vertices $v$ are visited $r(v)$ times.
    \item In $\SDMVTSP_+$ with vertex disjoint tours, we are given a complete graph, an integer $1 \leq m \leq n-1$, and a visit function $r:V - v_1 \to \mathbb{Z}$. is to find exactly $k$ closed walks starting at the depot vertex so that all non-depot vertices $v$ are visited $r(v)$ times and any two closed walks only intersect at the depot vertex. 
\end{enumerate}

\section{General Framework} \label{Section:framework}
In this section we outline a general framework that we will apply to the various variants of multi-visit TSP problems. The following LP will be our template for the LP relaxation of the single visit variants of the problems we will consider. We note that only $\mvTSP$ and $\MVmTSP_+$ with arbitrary tours are the only problems that fall \emph{exactly} into the framework. Other problems fall very closely in the framework and for those problems we follow the template given in this section and use parts of the proofs given here.  For $D \subseteq V$ we write the following LP
\begin{align}
\text{minimize   } &\sum_{e \in E}c_ex_e \label{LP:general-single}\\
\text{s.t.   } & x(\delta(v)) = 2 &\forall v \in V \nonumber \\
 & x(\delta(S \cup D)) \geq 2 &\forall S \subseteq V - D \nonumber \\
 & x(E(D, D)) = 0 & \nonumber \\
 & 0 \leq x_e \leq 2 & \forall e \in E. \nonumber
\end{align}
We now show its corresponding multi-visit variant for a visit function $r: V - D \to \mathbb{Z}$
\begin{align}
\text{minimize   } &\sum_{e \in E}c_ex_e \label{LP:general-multi}\\
\text{s.t.   } & x(\delta(v)) = 2r(v) &\forall v \in V - D \nonumber \\
& x(\delta(v)) = 2 &\forall v \in D \nonumber \\
 & x(\delta(S \cup D)) \geq 2 &\forall S \subset V - D \nonumber \\
 & x(E(D, D)) = 0 & \nonumber \\
 & x_e \geq 0 & \forall e  \in E. \nonumber
\end{align}
We note that we can set $D = \emptyset$ in the above linear programs which gives us the result for $\mvTSP$ shown in Section \ref{Section:Overview}.

The following lemma allows us to relate the LP relaxation of the multi-visit problem to its corresponding single visit variant. 
\begin{lemma} \label{lemma:solution-construction}
    Let $G$ be a complete graph with edge set $E$, $c$ be a cost function on its edges, and $r: V - D \to \mathbb{Z}$ be a integer valued function on the vertex set. Let $G^r = (V^r, E^r)$ be a complete graph on vertex set $V^r$ that has $r(v)$ copies of each vertex $v \in V - D$ and one copy of each vertex $v \in D$. We extend the cost $c$ function that assigns cost $c_e$ for $e = \{u, v\} \in E$ to all edges $\{u_i, v_j\} \in E^r$. Given a solution $x$ to LP \eqref{LP:general-multi} we can construct a feasible solution $z$ to LP \eqref{LP:general-single} on graph $G^r$ satisfying $c^Tx = c^Tz$.
\end{lemma}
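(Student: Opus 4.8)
The plan is to construct $z$ by \emph{proportionally splitting} the mass of each edge of $G$ evenly among the corresponding copy-edges of $G^r$. Concretely, for an edge $\{u,v\}\in E$ I will set $z_{\{u_i,v_j\}} = x_{\{u,v\}}/(r(u)r(v))$ for every pair of copies $u_i,v_j$ (reading $r(w)=1$ for $w\in D$, and distributing the mass of any loop at $v$ uniformly among the edges joining the $r(v)$ copies of $v$). This immediately preserves cost: each copy-edge of $\{u,v\}$ inherits cost $c_{uv}$ and the $z$-values of these copy-edges sum to $x_{\{u,v\}}$, so $c^Tz = c^Tx$. It also makes all copies of a vertex \emph{twins}, i.e.\ $z_{pt}=z_{qt}$ whenever $p,q$ are copies of the same original vertex and $t$ is any other vertex; this symmetry is the property I will lean on for the cut constraints.

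The easy constraints I would verify by direct computation. For a copy $u_i$ of $v\in V-D$ the incident $z$-mass telescopes to $x(\delta(u))/r(u) = 2r(u)/r(u) = 2$, and for a depot $w$ the incident mass is exactly $x(\delta(w)) = 2$, so every degree constraint of LP \eqref{LP:general-single} holds. Nonnegativity is clear, the bound $z_e\le 2$ follows from $x_{\{u,v\}}\le \min(x(\delta(u)),x(\delta(v))) = 2\min(r(u),r(v))$, and $z(E(D,D))=0$ is inherited from $x(E(D,D))=0$.

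The crux, and the step I expect to be the main obstacle, is the subtour/cut family $z(\delta(S\cup D))\ge 2$ for $S\subseteq V^r-D$ with $U:=S\cup D$ a proper nonempty set. I would split into two cases. If $U$ is \emph{clean}, meaning it contains, for each original vertex, either all or none of its copies, then contracting copies identifies $U$ with a set $T\cup D$ in $G$, and proportional splitting gives $z(\delta(U)) = x(\delta(T\cup D))\ge 2$ by feasibility of $x$ for LP \eqref{LP:general-multi}. Otherwise some original vertex has a copy $p\in U$ and a copy $q\notin U$, and here I invoke the twin symmetry. Writing $W = U\setminus\{p\}$ and $Y = (V^r\setminus U)\setminus\{q\}$, the twin property gives $z(E(W,p)) = z(E(W,q)) =: \alpha$, so expanding the cut and substituting the degree identity $z(\delta(p)) = \alpha + z_{pq} + z(E(p,Y)) = 2$ collapses everything to $z(\delta(U)) = z(E(W,Y)) + 2 \ge 2$. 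Thus every partial cut holds automatically, in fact with slack $z(E(W,Y))$, and the lemma follows.

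The delicate point is precisely that proportional splitting is what supplies the twin symmetry making this last cancellation work: an adversarial splitting with the same marginals could concentrate mass and violate a cut bound, so the choice of construction is doing the real work. I would therefore emphasize in the write-up that the uniformity of the split, not merely the marginal-preservation needed for cost, is what drives the cut verification.
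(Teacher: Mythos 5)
Your construction of $z$ is exactly the paper's (uniform proportional splitting, with $r\equiv 1$ on depots), and your verification of the cost identity, degree constraints, bounds $0\le z_e\le 2$, and $z(E(D,D))=0$ matches the paper's line for line. Where you genuinely diverge is the cut constraints. The paper proves $z(\delta(S\cup D))\ge 2$ by induction on the number of original vertices whose copy-classes are split by the cut: using the twin symmetry it compares $Z_1=z(E(v_i,\,D\cup T - T(v)))$ with $Z_2=z(E(v_i,B))$ and shifts the entire copy-class of a split vertex to whichever side does not increase the cut value, eventually reducing to a ``clean'' cut that corresponds to a cut of $x$ in $G$. Your argument is non-inductive: for any non-clean cut you pick a single split pair $p\in U$, $q\notin U$, and the identity $z(\delta(U)) = z(E(W,Y)) + z(E(W,q)) + z(E(p,Y)) + z_{pq} = z(E(W,Y)) + \alpha + (2-\alpha) = z(E(W,Y)) + 2 \ge 2$ (using twin symmetry for $z(E(W,q))=z(E(W,p))=\alpha$ and the degree constraint at $p$) finishes immediately. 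This is correct and arguably cleaner: it shows that any cut separating two twins is automatically at least $2$ from the degree constraints alone, so the feasibility of $x$ for the cut family is only ever invoked for clean cuts, whereas the paper's induction threads the $x$-feasibility through every step. Your closing observation that the uniformity of the split (not just marginal preservation) is what powers the cut verification is the same structural point the paper's inductive exchange argument relies on. The only place where both you and the paper are equally informal is the treatment of loop edges $\{v,v\}$ and their images among the copies of $v$; your parenthetical about distributing loop mass uniformly is at least as careful as the paper's.
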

\begin{proof}
    Let $e' = \{u_i, v_j\} \in E^r$ and $e = \{u, v\}$ are the original copies of $u_i, v_j$ in $V$, then we set $z_{e} = \frac{x_e}{r(u)r(v)}$ and if either $u, v$ are in $D$ we think of $r(u) = 1$ or $r(v) = 1$ while defining $z$. For the last property we have, $\sum_{e \in E^r}c_e z_e = \sum_{e =\{u, v\} \in E} c_e r(u)r(v) z_e = \sum_{e \in E} c_e x_e $. We observe that the solution $x$ is constructed by distributing the degree of each vertex $v \in V - D$ uniformly to all $r(v)$ copies so we get that $x(\delta(v_i)) = \frac{z(\delta(v))}{r(v)} = 2$ and for $v \in D$ $x(\delta(v)) = z(\delta(v)) = 2$.  We get that for any $e \in E(D, D)$ we have $z_e = x_e = 0$ so $z(E(D, D)) = 0$. Finally $0 \leq z \leq 2$ follows since $z_e \leq 2$ if and only if $x_e \leq 2r(u)r(v)$ which follows since $x_e \leq 2\min(r(u), r(v))$.

    We show that $x$ satisfies $x(\delta(D \cup T)) \geq 2$ for all $T \subset V^r - D$. Let $k$ be the number of vertices $v \in V$ such that there exist copies of $v_i, v_j$ such that $T$ contains exactly one of $v_i, v_j$. We show that $x(\delta(D \cup T)) \geq 2$  exists by induction on $k$. If $k = 0$, then we take $S \subset V - D$ to be the set acquired by taking the original copy $v$ of each vertex $v_i \in T$ and we get that $z(\delta(T \cup D)) = x(\delta(S \cup D)) \geq 2$. If $k > 0$ there exits a vertex $v \in V - D$ such that $T$ does not contain all $r(v)$ copies of $v$. Let $B = V^r - (T \cup D)$, $T(v) = T \cap \{v_1, \ldots, v_{r(v)} \}$, $B(v) = \{v_1, \ldots, v_{r(v)} \} - T(v)$. For any $v_i \in T(v)$ let $Z_1 = z(E(v_i, D \cup T - T(v)), Z_2 = z(E(v_i, B))$ and we note that the values of $Z_1, Z_2$ do not change depending on the choice of $v_i$ since all copies of vertex $v$ have the same set of neighbors and $z$ values assigned to their edges. Thus we have that $z(E(T(v), D \cup T - T(v))) =  |T(v)|Z_1 $ and $z(E(T(v), B)) = |T(v)|Z_2$. In the case that $Z_2 \geq Z_1$, we observe that $z(\delta(D \cup T - T(v)) = z(\delta(D \cup T)) + z(E(T(v), D \cup T - T(v))) - z(E(T(v), B)) =z(\delta(D \cup T - T(v))) + |T(v)|(Z_1- Z_2) \leq  z(\delta(D \cup T - T(v)))$. Thus we get $z(\delta(D \cup T - T(v)) \geq 2$  by applying induction to $T - T(v)$ implying $z(\delta(D \cup T)) \geq 2$. Similarly if $Z_1 \geq Z_2$, we get that $2 \leq z(\delta(D \cup T \cup B(v)) = z(\delta(D \cup T)) + z(E(B(v), T \cup D)) - z(E(B(v), B - B(v))) = z(\delta(D \cup T)) + |B(v)|(Z_2 - Z_1) \leq  z(\delta(D \cup T \cup B(v)))$ and first inequality follows by applying induction to $T \cup B(v)$.
\end{proof}
Using these LP relaxations and Lemma \ref{lemma:solution-construction} , we can use apply an approximation algorithm for the single visit case to the multi-visit case which will have a run-time depending on $n$ and the visit function $r$. 

\begin{lemma} \label{lemma:general-smallR}
    Let $\mathcal{A}$ be an approximation algorithm that takes a solution $x$ to LP \eqref{LP:general-single} and outputs an integral solution $C \subseteq E$ to LP \eqref{LP:general-single} satisfying $\sum_{e \in C}c_e \leq \rho \sum_{e \in E}c_e x_e$ with a run-time polynomial in $n$. There exists an algorithm that given a solution $y$ to LP \eqref{LP:general-multi} outputs an integral solution $T: E \to \mathbb{Z}$ satisfying $\sum_{e \in E}c_eT(e) \leq \rho \sum_{e \in E}c_ey_e$ with a run-time polynomial in $\max_{v \in V}r(v)$ and $n$. 
\end{lemma}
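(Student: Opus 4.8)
The plan is to mimic the special-case argument for $\mvTSP$ given in Algorithm~\eqref{Algorithm:MVTSP} and Claims~\ref{claim:rbound-tsp} and~\ref{claim:feasible-TSP}, but now carried out with respect to the general LP relaxations \eqref{LP:general-single} and \eqref{LP:general-multi}. First I would invoke Lemma~\ref{lemma:solution-construction} to establish the base case: if the visit function $r$ were polynomially bounded in $n$, then building the blown-up graph $G^r$ (with $r(v)$ copies of each $v\in V-D$) and applying the single-visit algorithm $\mathcal{A}$ to the feasible solution $z$ produced by that lemma would already yield an integral tour $C\subseteq E^r$ of cost at most $\rho\sum_e c_e z_e = \rho\sum_e c_e y_e$. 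Contracting the copies back down (and reading off how many times each original edge is used) then gives the desired $T:E\to\mathbb{Z}$. Since $|V^r|=\sum_{v\in V-D} r(v) + |D|$, the run-time of $\mathcal{A}$ on $G^r$ is polynomial in $n$ and $\max_v r(v)$, matching the claimed bound. So the real content is reducing the \emph{general} instance to a polynomially-bounded one without losing objective value.

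To do this reduction I would run the same edge-fixing preprocessing as in Algorithm~\eqref{Algorithm:MVTSP}: solve LP~\eqref{LP:general-multi} to obtain $x^*$, and for each edge $e$ set aside an even integer amount $2k_e$ of flow (with $k_e=0$ when $x^*_e\le 4$, and otherwise choosing $k_e\in\mathbb{Z}$ so that the residual $\tilde x_e := x^*_e - 2k_e$ lies in $[2,4)$). Defining the reduced visit function $\tilde r(v) = r(v) - \sum_{e\in\delta(v)}k_e$ for $v\in V-D$ and leaving depot degrees untouched, I would re-prove the two supporting claims in this setting. The bound $1\le \tilde r(v)\le 2n$ follows exactly as in Claim~\ref{claim:rbound-tsp}, using $\tilde r(v)=\tfrac12\sum_{e\in\delta(v)}\tilde x_e$ together with $\tilde x_e\in[2,4)$ on fixed edges and the untouched degree-$2r(v)$ constraint. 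Feasibility of $\tilde x$ for LP~\eqref{LP:general-multi} on $(G,\tilde r)$ follows as in Claim~\ref{claim:feasible-TSP}: the degree constraints hold by construction, the cut constraints $x(\delta(S\cup D))\ge 2$ survive because subtracting an \emph{even} multiple $2k_e$ either leaves an untouched cut unchanged or leaves at least one edge $e$ with $\tilde x_e\ge 2$ crossing the cut, and the constraints $x(E(D,D))=0$ and $x\ge 0$ are preserved since we only subtract from edges whose residual stays nonnegative. One must also check that $k_e$ is only ever positive on edges not incident to $D$ so that the depot degree constraints are genuinely left intact; this holds because depot-degree-$2$ constraints force $x^*_e\le 2$ on depot edges, so $k_e=0$ there automatically.

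Having reduced to the polynomially-bounded instance $(G,\tilde r,\tilde x)$, I would apply the base-case argument above to $\tilde x$, obtaining an integral $\tilde T:E\to\mathbb{Z}$ with $\sum_e c_e\tilde T(e)\le \rho\sum_e c_e\tilde x_e$. The final tour is recovered by adding back the fixed flow: set $T(e)=\tilde T(e)+2k_e$. Then $\sum_e c_e T(e) = \sum_e c_e\tilde T(e) + \sum_e 2k_e c_e \le \rho\sum_e c_e\tilde x_e + \sum_e (x^*_e-\tilde x_e)c_e \le \rho\sum_e c_e x^*_e \le \rho\sum_e c_e y_e$, where the last step uses that $x^*$ optimizes LP~\eqref{LP:general-multi} while $y$ is merely feasible, and the middle step uses $\rho\ge 1$ to absorb the fixed edges. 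The remaining obligation is to verify that $T$ is genuinely an integral solution to LP~\eqref{LP:general-multi} on the original $(G,r)$ — i.e. that adding $2k_e$ copies of each fixed edge restores the original degrees $2r(v)$ and preserves connectivity — which follows since adding even amounts keeps degrees even and adding back the removed flow restores every degree constraint exactly, while connectivity of each resulting component is maintained because the fixed edges are added on top of an already-feasible connected structure.

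The step I expect to be the main obstacle is the feasibility of $\tilde x$ for the cut constraints $x(\delta(S\cup D))\ge 2$ over \emph{all} $S\subset V-D$, since unlike the degree and nonnegativity constraints this is an exponential family and the argument must handle both the case where every crossing edge is untouched and the case where some crossing edge carries large flow $x^*_e>4$. The key observation making this routine rather than delicate is that a single crossing edge with $\tilde x_e\ge 2$ already certifies the constraint, so the subtraction of even multiples never destroys a satisfied cut; this is precisely the mechanism used in Claim~\ref{claim:feasible-TSP} and it transfers verbatim once the depot set $D$ is carried along inside the cut $S\cup D$.
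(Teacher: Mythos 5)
Your first paragraph is exactly the paper's proof of this lemma: apply Lemma~\ref{lemma:solution-construction} to $y$ to obtain a feasible solution on the blown-up graph $G^r$, run $\mathcal{A}$ there, and contract the copies back to $G$, with the run-time bound following because $|V^r|\le n\max_{v\in V}r(v)$ and the lemma explicitly permits a run-time polynomial in $\max_{v\in V}r(v)$. The remaining two paragraphs --- the edge-fixing with $2k_e$, the bounds on $\tilde r$, and the feasibility of $\tilde x$ --- are not part of this lemma but rather the content of Algorithm~\eqref{Algorithm:general}, Claims~\ref{claim:rbound} and~\ref{claim:general-feasible}, and Theorem~\ref{theorem:general}, which the paper proves separately and which invoke this lemma as a black box only after the visit requests have been made polynomially bounded; so your proof is correct but over-scoped, and the only small omission in the part that matters is the one-line check that the contracted solution $T$ actually satisfies the degree and cut constraints of LP~\eqref{LP:general-multi}.
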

\begin{proof}
    We will convert the solution $y$ to a solution $x$ to LP \eqref{LP:tsp} on the graph $G^r$ by applying Lemma \ref{lemma:solution-construction} to $y$. Thus we can apply Algorithm $\mathcal{A}$ on $y$ as a solution to $G^r$ which we convert to a solution on $G$ by replacing every edge in the solution with its original corresponding edge in $G$. Let $T: E \to \mathbb{Z}$ be the solution we get. We now show $T$ is feasible for LP \eqref{LP:general-multi}. First for every $v \in V$ we get, $\sum_{e \in \delta(v)}T(e) = 2r(v)$ since the solution in $G^r$ contained $r(v)$ copies of $v$ each with degree $2$. Moreover, for all $S \subset V - D$ we have $\sum_{e \in \delta(S \cup D)}T(e) \geq 2$. For the approximation factor we have, $\sum_{e \in E}T(e)c_e \leq \rho \sum_{e \in E}c_e x_e = \rho \sum_{e \in E}c_e y_e$ where the first equality follows since we $\mathcal{A}$ is a $\rho$-approximation and the second equality follows by Lemma \ref{lemma:solution-construction}. Finally, the run-time follows since $G^r$ has at most $n \max_{v \in V}r(v)$ vertices and $\mathcal{A}$ is a polynomial time algorithm. 
\end{proof}

Using this lemma we get the following polynomial time algorithm. 
\begin{algorithm}[H]
\caption{General Reduction Algorithm}
\DontPrintSemicolon
\label{Algorithm:general}
\KwInput{$G=(V,E), D \subseteq V ,c: V \times V \to \mathbb{R}_{\geq 0},r: V - D \to \mathbb{Z}$}
\setcounter{AlgoLine}{0}
\KwOutput{an integral solution to LP \eqref{LP:general-multi}}

Solve LP \eqref{LP:general-multi} to get solution $x^*$ .\;

For all edges $e$ let $\tilde{x}_e = x_e - 2k_e$ such that $k_e = 0$ if $x_e \leq 4$ and otherwise $k_e$ is set so that $2 \leq \tilde{x}_e < 4$ and $k_e \in \mathbb{Z}$. Define a function $\tilde{r}: V \to \mathbb{Z}$ where $\tilde{r}(v) = r(v) - \sum_{e \in \delta(v)}k_e$.\;

Use Lemma \ref{lemma:general-smallR} with solution $\tilde{x}$ on instance $G, \tilde{r}$ to get $T:E \to \mathbb{Z}$.\;

Increase the number of times each edge is used in the previous step by $2k_e$ and return the resulting solution.\;
\end{algorithm}

The next two claims show that Step 3 of the algorithm is valid.

\begin{claim} \label{claim:rbound}
The new visit function $\tilde{r}$ satisfies $1 \leq \tilde{r}(v) \leq 2n$ for all $v \in V$.
\end{claim}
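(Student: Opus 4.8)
The plan is to mirror the argument already used for the TSP version (Claim~\ref{claim:rbound-tsp}), the key identity being that the rounding $\tilde{x}_e = x_e - 2k_e$ exactly offsets the reduction of the visit requirement, so that $\tilde{r}(v)$ equals half the \emph{reduced} degree $\tfrac12\,\tilde{x}(\delta(v))$. I would first rewrite $\tilde{r}(v)$ in terms of $\tilde{x}$: since $\tilde{x}_e = x_e - 2k_e$ gives $k_e = (x_e - \tilde{x}_e)/2$, summing over $e \in \delta(v)$ yields $\sum_{e \in \delta(v)} k_e = \tfrac12\bigl(x(\delta(v)) - \tilde{x}(\delta(v))\bigr)$. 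Substituting the degree constraint $x(\delta(v)) = 2r(v)$ of LP~\eqref{LP:general-multi} collapses this to
\[
\tilde{r}(v) = r(v) - \tfrac12\bigl(2r(v) - \tilde{x}(\delta(v))\bigr) = \tfrac12\,\tilde{x}(\delta(v)),
\]
exactly as in the single-visit reduction.

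For the lower bound I would split on whether any edge incident to $v$ was rounded. If every $e \in \delta(v)$ satisfies $x_e \leq 4$, then $k_e = 0$ throughout and $\tilde{r}(v) = r(v) \geq 1$ by the hypothesis $r(v) \geq 1$. Otherwise some $e \in \delta(v)$ has $x_e > 4$, hence $\tilde{x}_e \geq 2$ by construction; since all other $\tilde{x}$-values are nonnegative, $\tilde{x}(\delta(v)) \geq 2$ and so $\tilde{r}(v) \geq 1$. The upper bound is immediate from $\tilde{x}_e \leq 4$ for every edge (unmodified edges keep $\tilde{x}_e = x_e \leq 4$, while modified edges satisfy $\tilde{x}_e < 4$): on the complete graph $|\delta(v)| = n-1$, so $\tilde{r}(v) = \tfrac12\,\tilde{x}(\delta(v)) \leq 2(n-1) \leq 2n$.

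The only point requiring care beyond the TSP argument is the handling of depot vertices, since $r$ is defined only on $V - D$ while the claim asserts the bound for all $v \in V$. Here I would observe that every depot $v \in D$ has $x(\delta(v)) = 2$ in LP~\eqref{LP:general-multi}, which forces $x_e \leq 2 < 4$ for all incident edges; thus $k_e = 0$ at depots and $\tilde{r}(v) = r(v)$ under the natural convention $r(v) = 1$ for $v \in D$ (matching the degree requirement $2$), so $\tilde{r}(v) = 1 \in [1, 2n]$. I do not expect a genuine obstacle in this claim; the only thing to verify carefully is that the degree-constraint substitution and the uniform bound $\tilde{x}_e \leq 4$ both hold at depot vertices as well, which the degree-$2$ constraint guarantees.
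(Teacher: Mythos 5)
Your proof is correct and follows essentially the same route as the paper's: rewrite $\tilde{r}(v) = \tfrac12\,\tilde{x}(\delta(v))$ via the degree constraint, split on whether any incident edge has $x_e > 4$ for the lower bound, and use $\tilde{x}_e \leq 4$ over the at most $n-1$ incident edges for the upper bound. Your extra care about depot vertices (where $x(\delta(v)) = 2$ forces $k_e = 0$) is a sensible addition that the paper leaves implicit, but it does not change the argument.
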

\begin{proof}
    For $v \in V$ we have,
    \begin{align*}
        \tilde{r}(v) &= r(v) - \sum_{e \in \delta(v)}k_e \\
        &= r(v) - \sum_{e \in \delta(v)} \frac{x_e - \tilde{x}_e}{2} = \frac{1}{2}\sum_{e \in \delta(v)}\tilde{x}_e.
    \end{align*}
    If all $e \in \delta(v)$ satisfy $x_e \leq 4$ then $\tilde{r}(v) = r(v) \geq 1$. Otherwise the lower bound follows since for $x_e > 4$ we have $\tilde{x}_e \geq 2$. The upper bound follows since $\tilde{x}_e \leq 4$ for all $e \in E$.
\end{proof}
\begin{claim} \label{claim:general-feasible}
    The solution $\tilde{x}$ is feasible for LP \eqref{LP:general-multi} with graph $G$ and $\tilde{r}$.
\end{claim}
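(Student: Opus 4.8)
The plan is to verify each constraint of LP \eqref{LP:general-multi} for the truncated solution $\tilde{x}$ and the reduced visit function $\tilde{r}$, following exactly the structure of the proof of Claim \ref{claim:feasible-TSP} but additionally checking the two depot-specific constraints. The whole argument is a direct, case-based verification; the only conceptual point is noticing that the truncation step leaves the depot constraints untouched.

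First I would handle the degree constraints on non-depot vertices $v \in V - D$. Since $\tilde{x}_e = x_e - 2k_e$ by definition, summing over $\delta(v)$ gives $\tilde{x}(\delta(v)) = x(\delta(v)) - 2\sum_{e \in \delta(v)}k_e = 2r(v) - 2\sum_{e \in \delta(v)}k_e = 2\tilde{r}(v)$, using the definition of $\tilde{r}(v)$. Next I would treat the depot degree constraints $x(\delta(v)) = 2$ for $v \in D$. The key observation is that this equality together with $x \geq 0$ forces $x_e \leq 2 < 4$ for every edge $e \in \delta(v)$, so $k_e = 0$ and hence $\tilde{x}_e = x_e$ on all such edges; therefore $\tilde{x}(\delta(v)) = x(\delta(v)) = 2$ is preserved. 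The same kind of reasoning applies to $x(E(D,D)) = 0$: combined with $x \geq 0$ it forces $x_e = 0$, hence $k_e = 0$ and $\tilde{x}_e = 0$, on every edge of $E(D,D)$, so $\tilde{x}(E(D,D)) = 0$.

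For the cut constraints $\tilde{x}(\delta(S \cup D)) \geq 2$ with $S \subset V - D$, I would reuse verbatim the case split from Claim \ref{claim:feasible-TSP}. If every edge crossing the cut has $x_e \leq 4$, then $k_e = 0$ on all of them and $\tilde{x}(\delta(S \cup D)) = x(\delta(S \cup D)) \geq 2$ by feasibility of $x$. Otherwise some crossing edge $e \in \delta(S \cup D)$ has $x_e > 4$, which by the construction forces $2 \leq \tilde{x}_e$, and that single edge already witnesses $\tilde{x}(\delta(S \cup D)) \geq \tilde{x}_e \geq 2$. Finally, nonnegativity $\tilde{x}_e \geq 0$ holds because $\tilde{x}_e = x_e \geq 0$ when $k_e = 0$, and $\tilde{x}_e \geq 2 > 0$ when $k_e > 0$.

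I do not expect a genuine obstacle here: the claim is essentially Claim \ref{claim:feasible-TSP} augmented with the depot constraints, and the only new ingredient is the remark that the constraints $x(\delta(v)) = 2$ for $v \in D$ and $x(E(D,D)) = 0$ pin the relevant edge values to lie in $[0,2]$ (or be exactly $0$), so no truncation occurs on those edges and the constraints carry over unchanged. Once that is stated, every remaining step is a one-line computation or the same cut-constraint dichotomy already used for $\mvTSP$.
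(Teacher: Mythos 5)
Your proof is correct and follows essentially the same route as the paper's: the same direct computation for the degree constraints, the same two-case dichotomy for the cut constraints, and the same observation that $x_e = 0$ on $E(D,D)$ forces $\tilde{x}_e = 0$ there. You are in fact slightly more explicit than the paper in checking the depot degree constraints $x(\delta(v)) = 2$ for $v \in D$ (noting that they pin $x_e \leq 2 < 4$ and hence $k_e = 0$ on those edges), which the paper's proof glosses over.
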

\begin{proof}
    We have $\tilde{x} \geq 0$ and $\sum_{e \in \delta(v)}\tilde{x}_e = \sum_{e \in \delta(v)} x_e - 2k_e = 2(r(v) - \sum_{e \in \delta(v)}k_e ) = 2\tilde{r}(v)$. For any set $S \subset V - D$, if $x_e \leq 4$ for all $e \in \delta(S \cup D)$ then $\tilde{x}(\delta(S \cup D)) = x(\delta(S \cup D)) \geq 2$. Otherwise if there is an edge $e \in \delta(S \cup D)$ such that $x_e > 4$ we get $\tilde{x}(\delta(S \cup D)) \geq \tilde{x}_e \geq 2$ by definition of $\tilde{x}$. Finally we have $\tilde{x}_e = x_e = 0$ for all $e \in E(D, D)$ implying $\tilde{x}(E(D, D)) = 0$. Moreover, $T$ contains no edges between vertices in $D$ since $k_e = 0$ for all $e \in E(D, D)$ and $T'(e) = 0$ for all $e \in E(D, D)$.
\end{proof}

\begin{theorem} \label{theorem:general}
    Let $x^*$ be the optimal LP solution to LP \eqref{LP:general-multi} and $\rho$ be the approximation factor of algorithm $\mathcal{A}$ whose guarantee is relative the value of LP \eqref{LP:general-single}. Then Algorithm \eqref{Algorithm:general} outputs a feasible integral solution to LP \eqref{LP:general-multi}, $T: E \to \mathbb{Z}$ satisfying $\sum_{e \in E}T(e)c_e \leq \rho c^Tx^*$ and runs in time polynomial in $n$.
\end{theorem}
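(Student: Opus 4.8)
The plan is to assemble Theorem~\ref{theorem:general} directly from the ingredients already in place: Claims~\ref{claim:rbound} and~\ref{claim:general-feasible} certify that Step~2 of Algorithm~\eqref{Algorithm:general} produces a valid reduced instance, and Lemma~\ref{lemma:general-smallR} disposes of that reduced instance in polynomial time. So I would first fix notation. Let $x^*$ be the optimal solution to LP~\eqref{LP:general-multi} computed in Step~1, let $\tilde{x}$ and $\tilde{r}$ be as defined in Step~2, let $T'$ be the integral solution returned by Lemma~\ref{lemma:general-smallR} applied to $\tilde{x}$ on the instance $(G,\tilde{r})$ in Step~3, and let $T(e)=T'(e)+2k_e$ be the final output of Step~4. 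The proof then splits cleanly into feasibility, the cost bound, and the run-time.

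For feasibility I would invoke Claim~\ref{claim:general-feasible} to certify that $\tilde{x}$ is a legitimate input for Lemma~\ref{lemma:general-smallR}, so that $T'$ is feasible for LP~\eqref{LP:general-multi} on $(G,\tilde{r})$, and then check each constraint for $T$. For the degree constraint at $v\in V-D$, feasibility of $T'$ gives $\sum_{e\in\delta(v)}T'(e)=2\tilde{r}(v)$, and adding $2\sum_{e\in\delta(v)}k_e$ recovers exactly $2r(v)$ via $\tilde{r}(v)=r(v)-\sum_{e\in\delta(v)}k_e$. The key observation for the depot vertices is that $x^*(\delta(v))=2$ forces $x_e\leq 2\leq 4$ on every edge incident to a depot, hence $k_e=0$ on all such edges; this makes the depot degree constraints and the constraint $T(E(D,D))=0$ immediate, since $T'$ already satisfies them and no $k_e$ mass is added. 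For the cut constraints on $\delta(S\cup D)$, monotonicity suffices: $T(e)\geq T'(e)$ because $k_e\geq 0$, so $T(\delta(S\cup D))\geq T'(\delta(S\cup D))\geq 2$.

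The cost bound is where the real argument lives, though it reduces to one line of algebra. Using $\sum_e c_e\tilde{x}_e=c^Tx^*-2\sum_e c_e k_e$ together with the guarantee $\sum_e c_e T'(e)\leq\rho\sum_e c_e\tilde{x}_e$ from Lemma~\ref{lemma:general-smallR}, I would compute
\begin{align*}
\sum_{e\in E}c_e T(e) &= \sum_{e\in E}c_e T'(e)+2\sum_{e\in E}c_e k_e \\
&\leq \rho\left(c^Tx^*-2\sum_{e\in E}c_e k_e\right)+2\sum_{e\in E}c_e k_e \\
&= \rho\,c^Tx^*-2(\rho-1)\sum_{e\in E}c_e k_e \leq \rho\,c^Tx^*,
\end{align*}
where the last inequality uses $\rho\geq 1$, $k_e\geq 0$, and $c_e\geq 0$. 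The conceptual point I want to highlight is that the fixed edges contributing $2k_e$ are charged at their true cost rather than inflated by $\rho$, so the saved mass precisely absorbs the rounding overhead. Finally, the run-time is polynomial in $n$: Step~1 solves LP~\eqref{LP:general-multi} via a min-cut separation oracle for the exponentially many cut constraints, Steps~2 and~4 are trivially polynomial, and Step~3 runs in time polynomial in $\max_v\tilde{r}(v)$ and $n$ by Lemma~\ref{lemma:general-smallR}, which is polynomial in $n$ since $\tilde{r}(v)\leq 2n$ by Claim~\ref{claim:rbound}. I do not expect a genuine obstacle here; the only point demanding care is the bookkeeping that depot edges carry no $k_e$ mass, which is exactly what lets the depot-specific constraints survive the lift from $T'$ to $T$.
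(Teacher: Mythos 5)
Your proposal is correct and follows essentially the same route as the paper: bound the cost of $T'$ by $\rho\sum_e c_e\tilde{x}_e$ via Claims~\ref{claim:rbound} and~\ref{claim:general-feasible} together with Lemma~\ref{lemma:general-smallR}, add back the $2k_e$ copies at true cost, and absorb the difference using $\rho\geq 1$. Your write-up merely makes explicit two points the paper leaves terse --- the final algebraic step $\rho\,c^Tx^*-2(\rho-1)\sum_e c_ek_e\leq\rho\,c^Tx^*$ and the observation that depot-incident edges carry no $k_e$ mass (the latter is handled inside Claim~\ref{claim:general-feasible} in the paper) --- so no substantive difference remains.
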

    \begin{proof}
        Let $T'$ be the solution we get from the third step before we increase each edge by $2k_e$. By Claim \ref{claim:rbound} and Lemma \ref{lemma:general-smallR} we have that $\tilde{x}$ satisfies 
        $ \sum_{e \in E} T'(e)c_e \leq  \rho \sum_{e \in E}c_e \tilde{x}_e $. 
        Let $S = \{e \in E | k_e > 0  \}$ be the set of edges then we have that $\sum_{e \in E}T(e)c_e = \sum_{e \in E}T'(e)c_e  + 2\sum_{e \in S} k_e c_e \leq \rho \sum_{e \in E}c_e \tilde{x}_e  + 2\sum_{e \in S} k_e c_e \leq \rho c^Tx^*$. For the run-time, we get have that Step $3$ runs in time polynomial in $n$ by  Claim \ref{claim:rbound} and Lemma \ref{lemma:general-smallR} and the rest of the steps are clearly polynomial time in $n$. Finally, $T$ is a feasible integral solution to LP \eqref{LP:general-multi} since $T'$ satisfies the cut constraints which implies $T$ also satisfies the cut constraints since $T(e) \geq T'(e)$ for all $e$ and by definition $T$ satisfies degree constraints. 
    \end{proof}

\section{Tree Characterizations}\label{sec:tree}
Many algorithms for $\tsp$ variants work with a tree with some specified structure and a LP characterization of the tree is needed to bound the approximation guarantee. In this section we characterize the polytopes of three different trees. We first characterize connected graphs that have fixed degree $2k$ on vertex $v_1 \in V$.
We define $\kappa(S)$ as the number of components in the graph $(V, S)$ for all $S \subseteq E$,
    \begin{align} 
        \text{minimize   } &\sum_{e \in E}c_ez_e \label{LP:tree} \\
        \text{s.t.   } & z(S) \geq \kappa(\overline{S}) -1   & \forall S \subseteq E \nonumber \\
        & z(\delta(v_1)) = 2k & \nonumber \\
         & z_e \geq  0 & \forall e \in E \nonumber.
    \end{align}
    \begin{claim} \label{claim:LP:tree}
        Let $T^*_{2k}$ be a minimum cost spanning tree among all spanning trees in $G$ that have degree $2k$ on vertex $v_1$. Then we have that the indicator vector $T^*_{2k}$ is an optimal solution to LP \eqref{LP:tree}.
    \end{claim}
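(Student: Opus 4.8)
The plan is to establish the claim in two halves: first that the indicator vector $\chi^{T^*_{2k}}$ is feasible for LP \eqref{LP:tree}, and then that no feasible point is cheaper, which I will reduce to the integrality of a matroid intersection polytope.

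For feasibility, the degree constraint $z(\delta(v_1)) = 2k$ and nonnegativity hold by the very definition of $T^*_{2k}$, so the only content is the covering family $z(S) \ge \kappa(\overline S) - 1$. I would fix $S \subseteq E$, let $C_1,\dots,C_m$ (with $m = \kappa(\overline S)$) be the components of $(V,\overline S)$, and observe that every edge of $G$ joining two distinct blocks $C_i$ must lie in $S$ (otherwise it would be in $\overline S$ and merge the blocks). Since $(V, T^*_{2k})$ is connected, the edges of $T^*_{2k}$ crossing between these $m$ blocks must reconnect them and hence number at least $m-1$; all such edges lie in $S$, so $\chi^{T^*_{2k}}(S) = |T^*_{2k}\cap S| \ge m-1 = \kappa(\overline S)-1$.

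For optimality I would pass through matroid intersection. The spanning trees of $G$ with degree exactly $2k$ at $v_1$ are precisely the common bases of the graphic matroid $M_1$ on $E$ and the partition matroid $M_2$ whose parts are $\delta(v_1)$ (capacity $2k$) and $E\setminus\delta(v_1)$ (capacity $n-1-2k$): a common base has $n-1$ acyclic connected edges, exactly $2k$ of which meet $v_1$. By the matroid-intersection polytope description recalled in the Preliminaries, the common-independent-set polytope $\{x \ge 0 : x(F)\le \min(r_1(F),r_2(F))\ \forall F\}$ is integral, and intersecting it with $x(E)=n-1$ gives an integral common-base polytope whose minimum-cost vertex is exactly $\chi^{T^*_{2k}}$, of value $c(T^*_{2k})$. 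The bridge to LP \eqref{LP:tree} is the standard identity $r_1(F) = n - \kappa(F)$ for the graphic matroid, which rewrites each covering constraint as $z(S)\ge (n-1) - r_1(\overline S)$; on the hyperplane $\{z(E) = n-1\}$ one has $z(S) = (n-1) - z(\overline S)$, so this is equivalent to $z(F) \le r_1(F)$ for every $F=\overline S$, i.e. membership in the spanning-tree polytope, and together with $z(\delta(v_1))=2k$ it coincides with the common-base polytope above.

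The step I expect to be the main obstacle is justifying that the optimum of LP \eqref{LP:tree} is attained on the hyperplane $z(E)=n-1$, because the covering (dominant) formulation combined with the degree \emph{equality} is not literally the matroid-intersection polytope, and intersecting a dominant with an equality hyperplane can in principle create fractional vertices. I would handle this with a trimming argument: since $c \ge 0$ and the covering constraints are monotone, any feasible $z$ with $z(E) > n-1$ can be reduced by removing weight from edges outside $\delta(v_1)$ (so as to preserve $z(\delta(v_1))=2k$) without increasing cost and while keeping all covering constraints satisfied, until $z(E)=n-1$; the resulting point then lies in the common-base polytope, so its cost, and hence that of the original $z$, is at least $c(T^*_{2k})$. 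Together with feasibility this identifies $\chi^{T^*_{2k}}$ as an optimal solution. Should the trimming argument prove delicate, an alternative is to invoke total dual integrality of the matroid-intersection system (also recalled in the Preliminaries) and exhibit, by complementary slackness, an integral dual certifying optimality of $\chi^{T^*_{2k}}$.
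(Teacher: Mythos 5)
Your feasibility argument is correct, and so is your identification of the slice $\{z : z(E)=n-1\}$ of LP \eqref{LP:tree} with the common base polytope of the graphic matroid and the partition matroid with parts $\delta(v_1)$ and $E\setminus\delta(v_1)$ (this is in fact the matroid pair the paper uses for the closely related LP \eqref{LP:tree-unrestricted-arbit}). The paper's own proof of this claim takes a different route: it complements, pairing the \emph{dual} of the graphic matroid with a partition matroid of capacities $d(v_1)-2k$ and $1$, and substitutes $z_e=1-x_e$, so that the covering constraints $z(S)\ge \kappa(\overline S)-1$ become cographic rank constraints and $z\ge 0$ becomes the capacity-one constraints; no restriction to a cardinality hyperplane is ever needed there.

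The genuine gap is exactly the step you flagged as the main obstacle: reducing LP \eqref{LP:tree} to its slice $z(E)=n-1$. The trimming argument fails. Take $G=K_3$ on $\{v_1,v_2,v_3\}$ with $2k=2$ and the point $z_{v_1v_2}=\tfrac32$, $z_{v_1v_3}=\tfrac12$, $z_{v_2v_3}=\tfrac12$. It is feasible for LP \eqref{LP:tree} with $z(E)=\tfrac52>2=n-1$, yet the unique edge outside $\delta(v_1)$ cannot be decreased: the covering constraint for $S=\{v_1v_3,v_2v_3\}$, whose complement $\{v_1v_2\}$ leaves two components, is tight at value $1$. Worse, the statement you want from this step is false for general nonnegative costs: with $c_{v_1v_2}=0$, $c_{v_1v_3}=10$, $c_{v_2v_3}=1$ this point costs $5.5$ while the unique degree-$2$ spanning tree $\{v_1v_2,v_1v_3\}$ costs $10$, so no cost-oblivious trimming (and no integral dual for the system as written) can close the gap. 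Any correct completion must either invoke the paper's standing assumption that $c$ is a semi-metric (here $c_{v_1v_3}\le c_{v_1v_2}+c_{v_2v_3}$ is precisely what restores the inequality, because the only improving direction at this point moves weight back onto $\delta(v_1)$), or add the missing upper bounds $z_e\le 1$ to the formulation; note that the same point, having $z_{v_1v_2}=\tfrac32>1$, shows LP \eqref{LP:tree} is strictly larger than the translated matroid-intersection polytope, so the issue cannot be argued away on purely polyhedral grounds.
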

    \begin{proof}
    
    First we define matroid $\mathcal{M}_1$ as the dual of the graphic matroid on graph $G$. Next we define $\mathcal{M}_2$ as a partition matroid with parts $P_1,P_2, \ldots P_{|E|}$ where $P_1$ contains edges incident to $v_1$ and has capacity $d(v_1) - 2k$, and the remaining edges go in a unique $P_j$ with capacity $1$. Thus common independent sets of $\mathcal{M}_1, \mathcal{M}_2$ are sets $R \subseteq E$ such that $R$ has at most $d(v_1) - 2k$ edges incident to $v_1$ and $E - R$ contains a spanning tree. Moreover, an optimal solution must be also satisfy that $E - R$ is a spanning tree. By the matroid intersection theorem, the polytope given by constraints is $\{ x \geq 0 | x(S) \leq r_{\mathcal{M}_i}(S), \forall i \in [2] \text{ and } S \subseteq E \}$ totally-dual integral and therefore integral. Turning an inequality to equality in a TDI system maintains the TDI property so we can restrict our polytope to common independent sets of $\mathcal{M}_1, \mathcal{M}_2$ that have degree exactly $d(v_1) -2k$ on $v_1$. Then we observe that $E - T^*_{2k}$ is an optimal solution to $\max_{I \in \mathcal{I}_1 \cap \mathcal{I}_2}c(I)$ and is also an optimal solution to the following LP
    \begin{align} 
        \text{maximize } &\sum_{e \in E}c_ex_e \label{LP:matroid} \\
        \text{s.t.   } & x(\delta(v_1)) = d(v_1) - 2k   & \nonumber \\
        & x(S) \leq r_{\mathcal{M}_1}(S) & \forall S \subset E \nonumber \\
        & x_e \leq 1 & \forall e \in E \nonumber.
    \end{align}
Here we have that $r_{\mathcal{M}_1}(S) = |S| - \kappa(\overline{S}) + 1$  since $\mathcal{M}_1$ is the dual matroid to the graphic matroid. If we negate the objective of LP \eqref{LP:matroid}, change it to a minimization problem, add $\sum_{e \in e}c_e $ to the objective, and make the variable change $z_e = 1 - x_e$ we get LP \eqref{LP:tree}. This is true since the following hold for all $S \subseteq E$
\begin{enumerate}
    \item $\sum_{e \in E}c_e - \sum_{e \in E}c_ex_e = \sum_{e \in e}c_ez_e$
    \item $x(\delta(v_1)) = d(v_1) - 2k \iff 2k = d(v_1) - x(\delta(v_1)) \iff 2k = z(\delta(v_1))$ 
    \item $x(S) \leq |S| - \kappa(E - S) + 1 \iff \kappa(\overline{S}) - 1 \leq |S| - x(S) \iff \kappa(\overline{S}) - 1 \leq z(S)$.
\end{enumerate}

If $x^*$ is an optimal solution to LP \eqref{LP:matroid} then $z^* = \mathbf{1} - x^*$ is an optimal solution to LP \eqref{LP:tree}, so $T_{2k}^*$ is an optimal solution to LP \eqref{LP:tree} since $E - T_{2k}^*$ is an optimal solution to LP \eqref{LP:matroid}.
 \end{proof}
Similar to the previous LP, we now give the LP formulation of spanning trees who have fixed degree $2k \leq n-1$ on a fixed vertex $v_1$. We use the following LP where $n$ is the number of vertices in the graph 
    \begin{align} 
        \text{minimize   } &\sum_{e \in E}c_ez_e \label{LP:tree-unrestricted-arbit} \\
        \text{s.t.   } & z(E(S)) \leq |S| - 1   & \forall S \subseteq V \nonumber \\
        & z(\delta(v_1)) = 2k & \nonumber \\ 
        & z(E) = n - 1 & \nonumber \\
         &  0 \leq z_e \leq 1 & \forall e \in E \nonumber.
    \end{align}
\begin{claim} \label{claim:tree-unrestricted-arbit}
   Let $T^*_{2k}$ be a minimum cost spanning tree among all spanning trees in $G$ that have degree $2k$ on vertex $v_1$. Then we have that the indicator vector $T^*_{2k}$ is an optimal solution to LP \eqref{LP:tree-unrestricted-arbit}.
\end{claim}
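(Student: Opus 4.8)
The plan is to mirror the matroid-intersection argument of Claim~\ref{claim:LP:tree}, but to work with the graphic matroid directly rather than with its dual, since here we want the spanning trees themselves rather than their complements. First I would let $\mathcal{M}_1$ be the graphic matroid on $G$, whose bases are exactly the spanning trees of $G$ and whose rank function satisfies $r_{\mathcal{M}_1}(E(S)) = |S|-1$ for every nonempty $S \subseteq V$ (using that $G$ is complete, so $G[S]$ is connected). I would let $\mathcal{M}_2$ be the partition matroid whose parts are $P_1 = \delta(v_1)$ with capacity $2k$ together with a singleton part of capacity $1$ for each remaining edge. With these choices the common independent sets of $\mathcal{M}_1$ and $\mathcal{M}_2$ are precisely the forests of $G$ using at most $2k$ edges at $v_1$, and a common independent set that is a base of $\mathcal{M}_1$ meeting $P_1$ in exactly $2k$ edges is exactly a spanning tree of degree $2k$ at $v_1$; such an object exists because $1 \le 2k \le n-1$.

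Next I would invoke the matroid intersection theorem quoted in the Preliminaries: the convex hull of common independent sets of $\mathcal{M}_1,\mathcal{M}_2$ equals $\{z \ge 0 : z(F) \le \min(r_{\mathcal{M}_1}(F), r_{\mathcal{M}_2}(F))\ \forall F \subseteq E\}$, and this system is TDI, hence integral. I would then rewrite the two families of constraints in the simpler form appearing in LP~\eqref{LP:tree-unrestricted-arbit}. The $\mathcal{M}_1$-constraints reduce to the forest inequalities $z(E(S)) \le |S|-1$ for all $S \subseteq V$: this is the standard forest-polytope equivalence, since for any $F$ one has $z(F) \le \sum_i z(E(V_i)) \le \sum_i(|V_i|-1) = r_{\mathcal{M}_1}(F)$, where the $V_i$ are the vertex sets of the nontrivial components of $(V,F)$. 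The $\mathcal{M}_2$-constraints reduce to the single part-capacity inequality $z(\delta(v_1)) \le 2k$ together with $z_e \le 1$ for all $e$ (the latter also follows from the unit graphic rank of a single edge in a simple complete graph).

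Finally, exactly as in Claim~\ref{claim:LP:tree}, I would use that turning an inequality of a TDI system into an equality preserves the TDI (hence integral) property, and that this may be done repeatedly. Turning the $S=V$ constraint $z(E) \le n-1$ into $z(E) = n-1$ restricts the polytope to bases of $\mathcal{M}_1$, and turning the part-capacity constraint $z(\delta(v_1)) \le 2k$ into $z(\delta(v_1)) = 2k$ fixes the degree at $v_1$. The resulting integral polytope is precisely the feasible region of LP~\eqref{LP:tree-unrestricted-arbit}, whose vertices are therefore the indicator vectors of spanning trees with degree $2k$ at $v_1$. Consequently an optimal solution is the indicator vector of the minimum cost such tree, namely $T^*_{2k}$.

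I expect the main obstacle to be the bookkeeping that identifies the abstract matroid-intersection polytope with the concrete constraint system of LP~\eqref{LP:tree-unrestricted-arbit}: in particular, justifying that the $E(S)$ inequalities alone capture all graphic-rank constraints, and verifying that the inequality turned into the equality $z(\delta(v_1)) = 2k$ is genuinely one of the explicit inequalities of the system (it is the $P_1$-capacity constraint of $\mathcal{M}_2$), so that the TDI-preservation lemma applies verbatim. Everything else parallels the already-established Claim~\ref{claim:LP:tree}.
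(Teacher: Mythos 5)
Your proposal is correct and follows essentially the same route as the paper: both intersect the graphic matroid of $G$ with a partition matroid that caps the edges at $v_1$ by $2k$, and both conclude via integrality of the matroid intersection polytope. The only (cosmetic) difference is that the paper gives the second part of the partition matroid capacity $n-1-2k$ so that common bases are directly the spanning trees of degree $2k$ at $v_1$, whereas you use singleton parts and recover the two equalities $z(E)=n-1$ and $z(\delta(v_1))=2k$ by the TDI-preservation argument already used in Claim~\ref{claim:LP:tree}.
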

\begin{proof}
    Let $\mathcal{M}_1$ be the graphic matroid on $G$ and $\mathcal{M}_2$ be a partition matroid with one part containing all edges incident to $v_1$ with capacity $2k$ and another part containing the remaining edges with capacity $n-1 -2k$. Then $T^*_{2k}$ is an optimal common base in $\mathcal{M}_1, \mathcal{M}_2$ and the polytope of common bases is give by the constraints of LP \eqref{LP:tree-unrestricted-arbit} .  
\end{proof}

Next we show another LP which will be used for a $\MVmTSP$ problem with depots. For a set of vertices $D $, let $\hat{G}$ be the graph with all vertices of $D$ contracted into a single vertex $\hat{d}$ and for $S \subseteq E - E(D, D)$ let $\kappa_{\hat{G}}(S)$ be the number of components in the graph $\hat{G}$ with edges $S$.
    \begin{align} 
        \text{minimize   } &\sum_{e \in E - E(D, D)}c_ex_e \label{LP:tree-depot} \\
        \text{s.t.   } & x(S) \geq \kappa_{\hat{G}}(\overline{S} - E(D, D)) - 1 & \forall S \subseteq E - E(D, D) \nonumber \\
         & x(E(d, V - D)) \geq 1 & \forall d \in D \nonumber \\
        &0 \leq x_e \leq 1 & \forall e \in E - E(D, D) \nonumber.
    \end{align}
We show the following claim which follows from the ideas from \cite{Cerdeira94}.
\begin{claim}
    A $D$-forest cover is a forest cover such that each component includes exactly one vertex from $D$ and each component has at least two vertices. Then the value of LP \eqref{LP:tree-depot} is the cost of the min cost $D$-forest cover.
\end{claim}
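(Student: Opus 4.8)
The plan is to follow the matroid-intersection template used in the proof of Claim~\ref{claim:LP:tree}, now carried out on the contracted multigraph $\hat{G}$. First I would define two matroids on the ground set $E - E(D,D)$, viewed as the edge set of $\hat{G}$: let $\mathcal{M}_1$ be the dual of the graphic matroid of $\hat{G}$, whose rank function is $r_{\mathcal{M}_1}(S) = |S| - \kappa_{\hat{G}}(\overline{S}) + 1$ (using that $G$ is complete, so $\hat{G}$ is connected and its graphic matroid has full rank $|V(\hat G)| - 1$), and let $\mathcal{M}_2$ be the partition matroid whose parts are the bundles $E(d, V-D)$ for each $d \in D$, with capacity $|E(d,V-D)| - 1$, together with a singleton part of capacity $1$ for every remaining edge. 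A common independent set $R$ of $\mathcal{M}_1$ and $\mathcal{M}_2$ is then exactly a set of edges such that $E(\hat G) - R$ contains a spanning tree of $\hat G$ and $R$ uses at most $|E(d,V-D)|-1$ edges incident to each depot.

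Next I would apply the substitution $z = \mathbf{1} - x$ and verify, constraint by constraint, that it carries LP~\eqref{LP:tree-depot} onto the matroid intersection polytope $\{z \geq 0 : z(S) \leq \min(r_{\mathcal{M}_1}(S), r_{\mathcal{M}_2}(S))\}$: the connectivity inequalities $x(S) \geq \kappa_{\hat G}(\overline S) - 1$ become $z(S) \leq r_{\mathcal{M}_1}(S)$, each depot inequality $x(E(d,V-D)) \geq 1$ becomes the part constraint $z(E(d,V-D)) \leq |E(d,V-D)| - 1$, and the box constraints correspond to the remaining singleton constraints together with $z \geq 0$. Since the matroid intersection polytope is integral (indeed TDI) and minimizing $\sum_e c_e x_e$ equals $\sum_e c_e - \max \sum_e c_e z_e$, the LP attains its optimum at an integral vertex $x^* = \mathbf{1} - z^*$, where $z^*$ is a maximum-cost common independent set.

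I would then read off the combinatorial meaning of the two bounds. On one side, the indicator of any $D$-forest cover is feasible: such a cover contracts to a spanning tree of $\hat G$, so for every $S$ at least $\kappa_{\hat G}(\overline S) - 1$ of its edges must cross between the $\overline S$-components (giving the connectivity inequalities), while the ``at least two vertices per component'' condition forces one edge at every depot (giving the depot inequalities); hence the LP value is at most the minimum $D$-forest cover cost. On the other side, because $z^*$ is independent in $\mathcal{M}_1$ the support of $x^*$ contains a spanning tree of $\hat G$, and because $z^* \in \mathcal{M}_2$ the vector $x^*$ places at least one edge at every depot; since $c \geq 0$ I can delete edges from $\operatorname{supp}(x^*)$ down to a \emph{minimal} connected, spanning, depot-touching subgraph of $\hat G$, which I claim un-contracts to a forest with exactly one depot and at least two vertices per component, i.e.\ a $D$-forest cover of cost at most $c^T x^*$. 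Combining the two bounds yields the claimed equality.

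I expect the main obstacle to be this last extraction step: verifying that a minimal connected, spanning, depot-touching subgraph of $\hat G$ really is a $D$-forest cover. The delicate point is that a ``cycle'' of $\hat G$ through the contracted vertex $\hat d$ may un-contract to a legitimate path between two \emph{distinct} depots, so I must argue by the usual exchange argument that any cycle in the \emph{original} graph — either among non-depot vertices, or through a single depot which then necessarily retains a second incident edge — contains an edge whose removal preserves both connectivity in $\hat G$ and the per-depot degree lower bounds, forcing a minimal solution to be acyclic in $G$. This is exactly the forest-cover structure underlying \cite{Cerdeira94}, and once acyclicity and the per-depot conditions are established the equality of the LP value and the minimum $D$-forest cover cost follows.
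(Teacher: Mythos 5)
Your proposal is correct and follows essentially the same route as the paper: the same two matroids on $E - E(D,D)$ (the dual of the graphic matroid of $\hat G$ and the partition matroid with per-depot capacity $n-|D|-1 = |E(d,V-D)|-1$), the complementation $z = \mathbf{1}-x$, and integrality via TDI of the matroid intersection polytope. The extraction step you flag is in fact immediate from the setup, since a spanning tree of $\hat G$ un-contracts precisely to a spanning forest of $G$ with exactly one depot per component (any path between two depots, or cycle through one, would contract to a cycle in $\hat G$), and the depot constraints give each such component at least two vertices.
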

\begin{proof}
    For a $D$-forest cover $F$ we observe that $\overline{F}$ is a edge set that satisfies $ |\{\{v, d\} \in \overline{F}, v \notin D\}| \leq n - |D|$ for all $d \in D$ and that $F$ is spanning tree in the graph $\hat{G}$. We now define matroids $\mathcal{M}_1, \mathcal{M}_2$ on the ground set $E - E(D, D)$ with independent sets $\mathcal{I}_1, \mathcal{I}_2$ whose common independents set correspond to complements of $D$-forest covers. We define $\mathcal{M}_1$ as a partition matroid which has parts $P_d$ for each $d \in D$ that contain edges $\{ \{d, v\} | v \notin D \}$ with capacity $n- |D| -1$ and all other edges $e \notin E(D, D)$ go in a unique part $P_e$ with capacity $1$. We define $\mathcal{M}_2$ as the dual of the graphic matroid on graph $\hat{G}$. Then the complements of $D$-forest covers are common independent sets of $\mathcal{M}_1, \mathcal{M}_2$ and the complements of all common independent sets of $\mathcal{M}_1, \mathcal{M}_2$ contain $D$-forest covers. This implies that the cost of a min cost $D$-forest cover is $c(E - E(D, D)) -  \max_{F \in \mathcal{I}_2 \cap \mathcal{I}_2  }c(F)$. By the characterization of the matroid intersection polytope the value of $\max_{F \in \mathcal{I}_2 \cap \mathcal{I}_2  }c(F)$ is 
    \begin{align}
        & \max \sum_{e \in E - E(D, D)}c_ex_e \nonumber \\
        \text{s.t.   } & \sum_{e \in E(d, V - D)}x_e \leq n - |D| - 1 & \forall d \in D \nonumber \\
        & x(S) \leq r_{\mathcal{M}_2}(S) & \forall S \subseteq E - E(D, D) \nonumber \\
        & 0 \leq x_e \leq 1 & \forall e \in E - E(D, D) \nonumber.
    \end{align}
    Similar to Claim \ref{claim:LP:tree} we make the variable change $z_e = 1 - x_e$, add $\sum_{e \in E - E(D, D)}c_e$, negate the objective, and make it minimization to get LP \eqref{LP:tree-depot}. This follows since the following hold
    \begin{enumerate}
        \item $\sum_{e \in E - E(D, D)}c_e - \sum_{e \in E - E(D, D)}c_ex_e = \sum_{e \in E - E(D, D)}c_ez_e$
        \item $\sum_{e \in E(d, V - D)}x_e \leq n - |D| - 1 \iff \sum_{e \in E(d, V - D)}z_e \geq 1 $ 
        \item $x(S) \leq |S| - \kappa_{\hat{G}}(E - E(D, D) - S) + 1 \iff z(S) \geq \kappa_{\hat{G}}(E - E(D, D) - S) - 1 \iff z(S) \geq \kappa_{\hat{G}}(\overline{S} - E(D, D)) - 1$.
    \end{enumerate}
        Thus $x^*$ is an optimal solution if and only if $z^* = \mathbf{1} - x^*$ is an optimal solution to LP \eqref{LP:tree-depot}.
  \end{proof}

\section{MV-mTSP Arbitrary Tours} \label{Section:arbitrary-depot}
\subsection{Approximation Algorithm with Depots}
In this subsection, we give a $2$-approximation for the $\MVmTSP_+$ with arbitrary tours problem. Here we are given a set of depot vertices $D$ with $|D| = k$ and the goal is to find $k$ closed walks that each closed walk uses exactly one depot. We note that $r(v) = 1$ for all $v \in D$ since we require all walks to be non-empty.  For the single visit case of the problem, a $2$-approximation was given by \cite{many-visit}. This algorithm is a simple combinatorial tree-doubling algorithm. To allow us to use this algorithm in our reduction, we first show that the tree doubling algorithm achieves a $2$-approximation relative to its LP relaxation for the single visit case in Lemma~\ref{lem:guarantee-tree-double}. We use the following LP which comes from LP \eqref{LP:general-single} by setting $D$ to the set of depot vertices 
    \begin{align} 
        \text{minimize   } &\sum_{e \in E}c_ex_e \label{LP:depotSingleVisit} \\
        \text{s.t.   } & x(\delta(v)) = 2   & \forall v \in V \nonumber  \\
        & x(\delta(S \cup D)) \geq 2 & \forall S \subset V - D \nonumber \\
        & x(E(D, D)) = 0 & \nonumber \\
         &0 \leq x_e \leq 2 & \forall e \in E \nonumber.
    \end{align}
    We recall the definition of a $D$-forest cover.
    \begin{definition}
        A $D$-forest cover is a forest cover such that each component includes exactly one vertex from $D$ and each component has at least two vertices.
    \end{definition}
    \begin{algorithm}
    \caption{Tree Doubling Algorithm for $\mtsp_+$ }
\DontPrintSemicolon
\label{Algorithm:mv-TSP-arbitrary}
\KwInput{$G=(V, E), D \subseteq V, c: V \times V \to \mathbb{R}_{\geq 0}$ with $|D| = k$}
\setcounter{AlgoLine}{0}
\KwOutput{$k$ cycles spanning the graph such that each cycle contains exactly one vertex from $D$}

Find a min cost $D$-forest cover .\;

Double all the edges in the $D$-cover and then shortcut so that each vertex is visited exactly once and return the resulting cycles.\;
\end{algorithm}

In the following claim, we show that the cost of min-cost $D$-forest cover as given by LP~\eqref{LP:tree-depot} is at most the cost of the optimal solution to LP~\eqref{LP:depotSingleVisit}.
\begin{claim} \label{claim:tree-doubling-2}
    Let $z^*$ be an optimal solution to LP \eqref{LP:depotSingleVisit} and $x^*$ be an optimal solution to LP \eqref{LP:tree-depot} then we have $c^Tx^* \leq c^Tz^*$.
\end{claim}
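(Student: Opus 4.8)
The plan is to exhibit a single feasible point for LP~\eqref{LP:tree-depot} (or a harmless relaxation of it) built from $z^*$ whose cost is at most $c^Tz^*$; since $x^*$ is optimal for the minimization LP~\eqref{LP:tree-depot}, this immediately gives $c^Tx^* \le c^Tz^*$. The construction mirrors the classical argument that the Held--Karp value dominates the fractional spanning tree value, carried out in the contracted graph. Let $\hat G$ be obtained from $G$ by contracting $D$ to a single vertex $\hat d$, and let $\hat n = |V-D|+1 = n-k+1$ be its number of vertices; a $D$-forest cover corresponds to a spanning tree of $\hat G$ and has exactly $\hat n-1$ edges. Since $z^*(E(D,D))=0$, the vector $z^*$ is supported on $E-E(D,D)$ and induces edge weights on $\hat G$, and I would set $\hat x = \tfrac{\hat n-1}{\hat n}\, z^*$.

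The key step is to show that under $z^*$ every cut of $\hat G$ has weight at least $2$, i.e. $z^*(\delta_{\hat G}(A)) \ge 2$ for every nonempty proper $A \subsetneq V(\hat G)$. For $A=\{\hat d\}$ this weight equals $\sum_{d\in D} z^*(\delta(d)) = 2k \ge 2$. For $A = \{\hat d\}\cup B$ with $\emptyset \neq B \subsetneq V-D$, the cut $\delta_{\hat G}(A)$ is exactly $\delta_G(B\cup D)$, so the constraint $z^*(\delta(B\cup D))\ge 2$ of LP~\eqref{LP:depotSingleVisit} applies; every remaining cut has $\hat d$ on the opposite side and equals one of these by taking complements. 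Consequently, for any partition $\mathcal P$ of $V(\hat G)$ into $t$ parts, summing over the parts gives $z^*(\delta_{\hat G}(\mathcal P)) = \tfrac12\sum_i z^*(\delta_{\hat G}(P_i)) \ge t$, so $\hat x(\delta_{\hat G}(\mathcal P)) \ge \tfrac{\hat n-1}{\hat n}\,t \ge t-1$ because $t \le \hat n$. Writing a covering constraint $z(S)\ge \kappa_{\hat G}(\overline S - E(D,D))-1$ via the partition $\mathcal P$ induced by the components of its complementary edge set, all cross-$\mathcal P$ edges lie in $S$, so these partition inequalities imply the covering constraints and $\hat x$ satisfies them. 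The depot constraints hold since $\hat x(E(d,V-D)) = \tfrac{\hat n-1}{\hat n}\cdot 2 \ge 1$ using $k \le n-1$, clearly $\hat x \ge 0$, and $c^T\hat x = \tfrac{\hat n-1}{\hat n}\,c^Tz^* \le c^Tz^*$.

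The one constraint $\hat x$ may violate is the box bound $x_e \le 1$ (as $z^*_e$ can be as large as $2$), and handling it is the main obstacle. I would argue that dropping $x_e \le 1$ leaves the optimal value of LP~\eqref{LP:tree-depot} unchanged: with nonnegative costs the covering relaxation (partition/connectivity constraints together with $x(E(d,V-D))\ge 1$ and $x\ge 0$) attains its minimum at an integral point, which is exactly the complement of a common independent set in the two matroids of the preceding claim, i.e. a $D$-forest cover, so its value equals that of LP~\eqref{LP:tree-depot}. Since $\hat x$ is feasible for this relaxation, we conclude $c^Tx^* \le c^T\hat x \le c^Tz^*$, as claimed.
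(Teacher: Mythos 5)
Your core argument is the same as the paper's: both proofs reduce the covering constraints of LP~\eqref{LP:tree-depot} to the fact that every cut of the contracted graph $\hat G$ has $z^*$-weight at least $2$, which follows from the constraints $z^*(\delta(S\cup D))\ge 2$, $z^*(\delta(v))=2$ and $z^*(E(D,D))=0$ of LP~\eqref{LP:depotSingleVisit}, and both verify the depot constraints from $z^*(\delta(d))=2$. The paper does this without your $\tfrac{\hat n-1}{\hat n}$ scaling: summing the $p$ cut constraints over the components of $\overline S - E(D,D)$ in $\hat G$ already gives $z^*(S)\ge p > p-1$, so the extra slack the scaling buys is not needed. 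That difference is cosmetic.

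The substantive difference is your treatment of the box constraint $x_e\le 1$. You are right to flag it: the paper's own proof asserts that $z^*$ is feasible for LP~\eqref{LP:tree-depot} without ever checking $z^*_e\le 1$, and this can genuinely fail on depot edges (LP~\eqref{LP:depotSingleVisit} allows $x_e\le 2$, and a depot of degree $2$ can place all its mass on a single edge to $V-D$; the subtour constraints only force $z^*_e\le 1$ for edges with both endpoints in $V-D$). However, your patch is asserted rather than proved. The statement that dropping $x_e\le 1$ leaves the value of LP~\eqref{LP:tree-depot} unchanged does not follow from the matroid-intersection argument in Section~\ref{sec:tree}: that argument obtains integrality via the substitution $z_e = 1-x_e$, which needs the box constraints on both sides, so it says nothing about the relaxation with the upper bounds removed. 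What you actually need is a Fulkerson-style fact that the covering system (the connectivity inequalities together with $x(E(d,V-D))\ge 1$ and $x\ge 0$) describes the dominant of the $D$-forest cover polytope; this is plausible and provable, but it is a separate claim requiring its own proof (e.g., via total dual integrality of the covering system, or by showing every extreme point of it dominates a convex combination of $D$-forest covers). As written, this final step is a gap in your argument --- though one that also exposes an unaddressed gap in the paper's own proof.
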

\begin{proof}
    Let $z$ be a feasible solution to LP \eqref{LP:depotSingleVisit}, we will show that $z$ is a feasible solution to LP \eqref{LP:tree-depot}. We note that LP \eqref{LP:tree-depot} is defined on edges $E - E(D, D)$ and $z$ is defined on edges $E$, but $z$ satisfies $z_e = 0$ for all $e \in E(D, D)$.  For all $d \in D$ we have that $z(E(d, V - D)) = z(\delta(d)) - z(E(d, D)) =  z(\delta(v)) = 2 > 1$ where the second equality follows since $0 \leq z(E(d, D)) \leq z(E(D, D))  = 0$. We recall that we use the graph $\hat{G}$ in LP \eqref{LP:tree-depot} which we get by contracting all vertices in $D$ to a single vertex. Let $\hat{d}$ be the contracted depot vertex in $\hat{G}$ and for $S \subseteq E - E(D, D)$ let $C_1, \ldots, C_p$ be the components of the sub-graph of $\hat{G}$ with edges $E - S - E(D, D)$ such that $\hat{d} \in C_1$. Then we have that $z(S) \geq  \sum_{i < j \leq p}z(E(C_i, C_j)) =\frac{1}{2} \left(z(\delta((C_1 - \hat{d}) \cup D )) + \sum_{i = 2}^p z(\delta(C_i)) \right) \geq p = \kappa_{\hat{G}}(\overline{S} - E(D, D)) > \kappa_{\hat{G}}(\overline{S} - E(D, D)) - 1$. The second to last inequality follows since for $i > 1$ we have $C_i \subseteq V - D$ so $z(\delta(C_i)) = z(\delta(D \cup C_i')) \geq 2$ for some $C_i' \subseteq V - D$.
\end{proof}

The following lemma now follows straightforwardly about the LP-based guarantee for Tree Doubling Algorithm.

\begin{lemma}\label{lem:guarantee-tree-double}
Let $z^*$ be an optimal solution to linear programming relaxation for the $\mtsp_+$,  LP~\eqref{LP:depotSingleVisit}. then the output of the Tree Doubling Algorithm returns a solution whose cost is at most twice the objective value of $z^*.$
\end{lemma}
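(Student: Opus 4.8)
The plan is to chain together the two structural facts already established in this section. First I would invoke Claim~\ref{claim:tree-doubling-2}, which tells us that the cost of the minimum-cost $D$-forest cover, as captured by LP~\eqref{LP:tree-depot}, is bounded above by $c^Tz^*$, the optimum of the $\mtsp_+$ relaxation LP~\eqref{LP:depotSingleVisit}. Since Algorithm~\ref{Algorithm:mv-TSP-arbitrary} begins by computing a minimum-cost $D$-forest cover $F$, and the preceding claim identified the value of LP~\eqref{LP:tree-depot} with exactly this min-cost $D$-forest cover, we immediately get $c(F) \leq c^Tz^*$.

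Next I would account for the doubling-and-shortcutting step. Doubling every edge of $F$ produces an edge multiset of cost $2c(F)$ in which every vertex has even degree, so each connected component admits an Eulerian closed walk. Because $F$ is a $D$-forest cover, each component contains exactly one depot and at least two vertices, so traversing the Eulerian walk of each component and shortcutting repeated vertices (which only decreases cost by the triangle inequality, i.e.\ the semi-metric property of $c$) yields exactly $k = |D|$ cycles, one per depot, together spanning $V$ with each non-depot vertex visited exactly once. Combining the two bounds gives that the returned solution has cost at most $2c(F) \leq 2\,c^Tz^*$, which is the claimed guarantee.

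The only real subtlety, rather than an obstacle, is bookkeeping about the shortcutting: one must confirm that each of the $k$ components yields a genuine cycle through its unique depot and that shortcutting does not merge or empty any cycle, so that all $k$ depots are used and feasibility for $\mtsp_+$ is preserved. This follows from the defining properties of a $D$-forest cover (one depot and at least two vertices per component) together with the triangle inequality, both of which are available to us. Since the heavy lifting — the LP characterization of the $D$-forest cover polytope and the comparison in Claim~\ref{claim:tree-doubling-2} — has already been done, the proof of the lemma is a short assembly of these pieces.
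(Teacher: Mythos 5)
Your proposal is correct and follows exactly the route the paper intends: Claim~\ref{claim:tree-doubling-2} (together with the LP characterization of the min-cost $D$-forest cover) bounds the cost of the forest cover found in Step~1 by $c^Tz^*$, and doubling plus triangle-inequality shortcutting then gives the factor of $2$ while preserving one cycle per depot. The paper states that the lemma ``follows straightforwardly'' from these pieces, and your write-up supplies precisely that assembly, including the feasibility bookkeeping.
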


We now present the reduction to achieve a $2$-approximation for the $\MVmTSP_+$ with arbitrary tours using Lemma~\ref{lem:guarantee-tree-double} and our general reduction in Theorem~\ref{theorem:general}.

\begin{theorem}
    There is a polynomial time $2$-approximation algorithm for the $\MVmTSP_+$ with arbitrary tours problem.
\end{theorem}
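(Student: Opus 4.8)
The plan is to instantiate the general reduction of Theorem~\ref{theorem:general} with $D$ taken to be the set of $k$ depot vertices. With this choice LP~\eqref{LP:general-single} is exactly the $\mtsp_+$ relaxation LP~\eqref{LP:depotSingleVisit} and LP~\eqref{LP:general-multi} is its multi-visit counterpart, so $\MVmTSP_+$ with arbitrary tours falls into the framework verbatim (recall $r(v)=1$ is forced for $v\in D$). The algorithm $\mathcal{A}$ I would feed into the framework is the Tree Doubling Algorithm~\ref{Algorithm:mv-TSP-arbitrary}, whose output on a single-visit instance is $k$ cycles each containing exactly one depot, and whose cost is at most twice the optimum of LP~\eqref{LP:depotSingleVisit} by Lemma~\ref{lem:guarantee-tree-double}. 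Thus $\mathcal{A}$ is a $2$-approximation relative to LP~\eqref{LP:general-single}, which is precisely the hypothesis of Theorem~\ref{theorem:general}.

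Feeding $\mathcal{A}$ with $\rho=2$ into Theorem~\ref{theorem:general} then yields, in time polynomial in $n$, a feasible integral solution $T\colon E\to\mathbb{Z}$ to LP~\eqref{LP:general-multi} with $\sum_{e\in E}c_eT(e)\le 2\,c^Tx^*$, where $x^*$ is the optimum of LP~\eqref{LP:general-multi}. Since any valid $\MVmTSP_+$ solution induces a feasible integral point of LP~\eqref{LP:general-multi} (degree $2r(v)$ at non-depots, degree $2$ at each depot, no depot--depot edges, and each connected walk separating a depot set satisfies the cut bound), the quantity $c^Tx^*$ lower bounds the optimal cost. It therefore remains only to turn $T$ into a genuine collection of exactly $k$ closed walks, each with exactly one depot, together visiting every $v\in V-D$ exactly $r(v)$ times.

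For this last step I would track the structure through the reduction rather than reason abstractly about $T$. Inside Lemma~\ref{lemma:general-smallR}, $\mathcal{A}$ is run on the blown-up graph $G^{\tilde r}$, where it returns $k$ cycles each containing exactly one copy of a depot; collapsing copies back to their originals turns these into $k$ closed walks in $G$, each with exactly one depot, visiting each $v\in V-D$ a total of $\tilde r(v)$ times. The final step of Algorithm~\ref{Algorithm:general} adds $2k_e$ parallel copies of each edge $e$ with $k_e>0$. Every such $e$ joins two non-depot vertices, because a depot has total degree $2$ in LP~\eqref{LP:general-multi}, so any depot-incident edge has $x_e\le 2$ and hence $k_e=0$. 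Each added edge $e=\{u,v\}$ can then be inserted as $k_e$ round-trips $u\to v\to u$ at an existing visit of $u$ (which exists since $\tilde r(u)\ge 1$ by Claim~\ref{claim:rbound}), staying inside a single walk, creating no new walk, and restoring exactly $k_e$ visits to each of $u$ and $v$; summing over $\delta(v)$ brings every vertex from $\tilde r(v)$ back up to $r(v)$, while the added cost is exactly $2k_ec_e$ per edge, matching the accounting in the proof of Theorem~\ref{theorem:general}.

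I expect the round-trip insertion to be the only genuinely delicate point, and it is exactly here that the ``arbitrary tours'' hypothesis is used: the trip $u\to v\to u$ places visits to $v$ on $u$'s walk, which is permitted only because a vertex's required visits may be split across several salespersons. For the vertex-disjoint variant this insertion is illegal, which is why the same reduction does not extend there. Modulo this observation the argument is a direct composition of Lemma~\ref{lem:guarantee-tree-double} and Theorem~\ref{theorem:general}, giving the claimed polynomial-time $2$-approximation.
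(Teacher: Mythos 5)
Your proposal is correct and follows essentially the same route as the paper: instantiate the general framework with $D$ equal to the depot set, observe that LP~\eqref{LP:general-single} and LP~\eqref{LP:general-multi} specialize to LP~\eqref{LP:depotSingleVisit} and LP~\eqref{LP:depotMv}, and plug the Tree Doubling Algorithm with its LP-relative guarantee from Lemma~\ref{lem:guarantee-tree-double} into Theorem~\ref{theorem:general}. Your third paragraph (checking that depot-incident edges have $k_e=0$ and realizing the $2k_e$ added copies as round trips inside existing walks) supplies a walk-reconstruction detail the paper leaves implicit, but it does not change the argument's structure.
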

\begin{proof}
The following linear program is a relaxation for the $\MVmTSP_+$ problem.
    \begin{align} 
        \text{minimize   } &\sum_{e \in E}c_ex_e \label{LP:depotMv} \\
        \text{s.t.   } & x(\delta(v)) = 2   & \forall v \in D \nonumber  \\
        & x(\delta(v)) = 2r(v)   & \forall v \in V - D \nonumber  \\
        & x(\delta(S \cup D)) \geq 2 & \forall S \subset V - D \nonumber \\
        & x(E(D, D)) = 0 & \nonumber \\
         &0 \leq x_e \leq 2 & \forall e \in E \nonumber.
    \end{align}
Observe that the LP~\eqref{LP:depotMv} is exactly the same as LP \eqref{LP:general-multi} in the general framework. Moreover, LP~\eqref{LP:depotSingleVisit} is exactly the same as LP~\eqref{LP:general-single}. From Lemma~\ref{lem:guarantee-tree-double}, we obtain that Tree Doubling Algorithm satisfies the condition as needed for Theorem~\ref{theorem:general}. Thus applying Theorem~\ref{theorem:general}, we obtain an integral solution whose cost is at most $2$ times the cost of the optimal solution to LP~\eqref{LP:general-multi} which is at most the cost of the optimal solution to the $\MVmTSP_+$ with arbitrary tours problem as claimed.
\end{proof}
    
\subsection{Approximation for Unrestricted Variant}
In this subsection we show there is a $2$-approximation for unrestricted $\mathrm{mTSP}_+$ with arbitrary tours. We note that we allow using loops for the single visit version here.  Our algorithm is the following.  

\begin{algorithm}[H]
\caption{Unrestricted $\mathrm{mTSP_+}$}
\DontPrintSemicolon
\label{Algorithm:unrestricted-arbit}
\KwInput{$G, k \in \mathbb{Z},1 \leq k \leq n$}
\setcounter{AlgoLine}{0}
\KwOutput{$k$ cycles that cover all vertices in the graph}

Add a new vertex $d$ that has all edges to vertices of $G$ to get a new graph $G'$. Extend the cost function of the graph by setting $c_{dv}  = \frac{c_{vv}}{2}$ .\;

Find a tree of minimum cost that has degree $k$ on the vertex $d$ .\;

Remove the vertex $d$ and all edges incident to it. Among the remaining $k$ components, if the component is a singleton then add a loop in that component. Otherwise double the edges of the tree in the remaining components and shortcut so that each component is a cycle. Return the resulting $k$ cycles .\;
\end{algorithm}
Now we describe our LP relaxation for this problem. We note that our LP is just the LP relaxation of the tree that integrals solutions must contain. For the previous problem this was implicitly true as we showed a fractional solution $x$ was in the up-hull of the tree polytope. In this case the tree is defined on a different graph $G'$ than $x$ is defined on. We denote by $E' = E \cup  \{ \{d, v \} | v \in V\}$ as the edge set of $G'$ where $d$ is a new dummy vertex that we added to the graph. We extend the cost function by setting $c_{dv} = \frac{c_{vv}}{2}$ for all $v \in V$. 

    \begin{align} 
        \text{minimize   } &\sum_{e \in E'}c_ez_e \label{LP:unrestricted-arbit} \\
        \text{s.t.   } & z(E'(S)) \leq |S| - 1   & \forall S \subseteq V \cup \{d\}  \\
        & z(\delta(d)) = k & \nonumber \\ 
        & z(E') = n & \nonumber \\
         &  0 \leq z_e \leq 1 & \forall e \in E' \nonumber
    \end{align}

\begin{claim}
    Let $\mathrm{OPT}$ be the optimal value of a optional solution to the unrestricted mTSP problem with $k$ salesperson and $z^*$ be an optimal solution to LP \eqref{LP:unrestricted-arbit}. Then we have $c^Tz^* \leq \mathrm{OPT}$.
\end{claim}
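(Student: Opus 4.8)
The plan is to read off the optimal value $c^{T}z^{*}$ of LP \eqref{LP:unrestricted-arbit} combinatorially and then to exhibit an LP-feasible spanning tree of $G'$ of cost at most $\mathrm{OPT}$. Note that LP \eqref{LP:unrestricted-arbit} has exactly the form treated in Claim \ref{claim:tree-unrestricted-arbit}, now written on the augmented graph $G'$: the constraint $z(E') = n$ forces $n$ edges on the $n+1$ vertices of $G'$, the acyclicity constraints make it a spanning tree, and $z(\delta(d)) = k$ fixes the degree of the dummy vertex $d$. Running the same matroid-intersection argument as in Claim \ref{claim:tree-unrestricted-arbit} (with the graphic matroid on $G'$ and the partition matroid separating $\delta(d)$ with capacity $k$ from the rest with capacity $n-k$) shows that $c^{T}z^{*}$ equals the minimum cost of a spanning tree of $G'$ whose degree at $d$ is exactly $k$. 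It therefore suffices to produce one such spanning tree of cost at most $\mathrm{OPT}$.

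To do this I would start from an optimal unrestricted $\mtsp_+$ solution, a family of exactly $k$ cycles $C_1,\dots,C_k$ partitioning $V$, where each $C_i$ spans $m_i$ vertices and a single loop (i.e.\ $m_i = 1$) is an allowed cycle. I build a subgraph of $G'$ as follows: for each cycle on $m_i \ge 2$ vertices, delete one cycle edge $\{a_i,b_i\}$ to obtain a path on the vertices of $C_i$ and add the edge $\{d,a_i\}$; for a loop on a single vertex $v_i$, just add the edge $\{d,v_i\}$. Because the $C_i$ partition $V$, the result is connected through $d$ and acyclic, and it has exactly one $d$-edge per cycle, so $d$ has degree $k$; the edge count is $\sum_{i}(m_i-1) + k = (n-k)+k = n$, confirming it is a spanning tree of $G'$ with the required degree at $d$, hence feasible for LP \eqref{LP:unrestricted-arbit}.

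The main step is the cost comparison, which is where the choice $c_{dv} = \tfrac12 c_{vv}$ is used. For a loop cycle on $v_i$ the $\mtsp_+$ cost is $c_{v_i v_i}$ while the tree pays only $c_{d v_i} = \tfrac12 c_{v_i v_i} \le c_{v_i v_i}$. For a cycle $C_i$ on at least two vertices, passing to the tree trades the deleted edge cost $c_{a_i b_i}$ for the connection cost $c_{d a_i} = \tfrac12 c_{a_i a_i}$; the triangle-inequality consequence $c_{vv} \le 2 c_{uv}$ noted in the Preliminaries gives $\tfrac12 c_{a_i a_i} \le c_{a_i b_i}$, so the tree cost restricted to the vertices of $C_i$ does not exceed the cost of $C_i$. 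Summing over the $k$ cycles shows the constructed spanning tree costs at most $\mathrm{OPT}$, and by the first paragraph $c^{T}z^{*} \le \mathrm{OPT}$. The only point requiring care---and the closest thing to an obstacle---is to attach $d$ at an endpoint of the deleted edge so that the inequality $\tfrac12 c_{a_i a_i} \le c_{a_i b_i}$ is the one being invoked, together with handling the degenerate loop case separately.
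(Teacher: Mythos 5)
Your proposal is correct and follows essentially the same route as the paper: take an optimal family of $k$ cycles, delete one edge from each cycle, attach an endpoint of the deleted edge to the dummy vertex $d$ via an edge of cost $\tfrac12 c_{vv} \le c_{uv}$, and observe the result is a spanning tree of $G'$ with degree $k$ at $d$, hence feasible for LP \eqref{LP:unrestricted-arbit} with cost at most $\mathrm{OPT}$. Your explicit treatment of singleton loop cycles and the edge-count check are details the paper leaves implicit, but the argument is the same.
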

\begin{proof}
    Let $C_1, C_2, \ldots, C_k$ a feasible solution then we will construct a feasible solution $z$ to LP \eqref{LP:unrestricted-arbit} whose cost (in the extended graph $G'$) is at most the cost of $C_1, \ldots, C_k$. To get $z$, we construct a spanning tree in the graph $G'$ such that the dummy vertex $d$ has degree $k$. To construct the tree $T$, for each $C_i$ we arbitrarily remove one edge from $C_i$ and add an edge between one of the endpoints of the missing edge and the dummy vertex $d$. For each $C_i$, if we remove edge $e = \{u, v\}$ with cost $c_e$ we add either $\{d, v\}$ or $\{d, u\}$ which only decreases the cost since $c_{du} = \frac{c_{uu}}{2} \leq c_e$ and  $c_{dv} = \frac{c_{vv}}{2} \leq c_e$.
\end{proof}

\Umtsp
\begin{proof}
 Let $M_1 ,\ldots, M_k$ be the output of Algorithm \eqref{Algorithm:unrestricted-arbit} and $z^*$ be an optimal solution to LP \eqref{LP:unrestricted-arbit}. We will show that $\sum_{i = 1}^kc(M_i) \leq 2c^Tz^*$. Let $T^*$ be the tree from the second step of Algorithm \eqref{Algorithm:unrestricted-arbit}. First we show that $\sum_{i = 1}^kc(M_i) \leq 2c(T^*)$. If we acquire $M_i$ by adding a loop to to a singleton vertex $v$ then the cost of $M_i$ is $c_{vv}$ while the cost of the edge adjacent to $M_i$ in $T^*$ is $\frac{c_{vv}}{2}$ which is twice the cost in the algorithms output. Now we consider when $M_i$ is a non-singleton component, let $\{d, v\}$ be the edge in $T^*$ such that $v \in M_i$ and let $u \in V$ such that $\{u, v\}$ is an edge in $T^*$. Then we have that $ c_{uv} + \frac{c_{vv}}{2} \leq 2c_{uv}$. For any other edge $e \in M_i$, the algorithm pays at most $2c_e$ while the cost in $T^*$ is $c_e$. Then summing up the cost of all cycles and applying these bounds gives $\sum_{i = 1}^kc(M_i) \leq 2c(T^*)$. Then the proof is concluded by observing $c(T^*) = c^Tz^*$ since $T^*$ is an optimal solution to LP \eqref{LP:unrestricted-arbit} by Claim \ref{claim:tree-unrestricted-arbit}.
\end{proof}

\section{Single Depot Multi-Visit mTSP}
In this section we give a $\frac{3}{2}$-approximation for the $\SDMVTSP_+$ with arbitrary tours problem and a $\frac{7}{2}$-approximation for the $\SDMVTSP_+$ with vertex disjoint tours problem.
\subsection{Approximation Algorithm for Arbitrary Tours}
First we convert the algorithm for $\SDTSP$  by Frieze \cite{Frieze} to a LP analysis since Frieze shows that this algorithm achieves a $3/2$ approximation relative to the integral optimal solution.
\begin{algorithm}[H]
\caption{Single Depot mTSP}
\DontPrintSemicolon
\label{Algorithm:TSP_single}
\KwInput{$G=(V = \{v_1, \ldots, v_n\}), c: V \times V \to \mathbb{R}_{\geq 0}, k \in \mathbb{N}$}
\setcounter{AlgoLine}{0}
\KwOutput{$k$ cycles that contain $v_1$ that span the graph and vertices not equal to $v_1$ are visited exactly once}

Among spanning trees $T$ such that $d_T(v_1) = 2k$ find a min cost tree $T^*$ .\;

Find a min cost perfect matching $M^*$ on the vertices with odd degree in $T^*$.\;

Add $M^*$ to $T^*$ which is now a Eulerian graph. Let $w_1 = v_1, \ldots,w_s = v_1$ be the Eulerian tour and let $U$ be the neighbors of $v_1$ in $T^*$. Delete a node $w_i$ in the sequence if
\begin{enumerate}
    \item $w_i$ has appeared before and $w_i \neq v_1$ or
    \item $w_i \in U$ and $v_1 \notin \{ w_{i-1}, w_{i + 1} \}$.\;
\end{enumerate} 

Return the sequence obtained after short cutting.\;
\end{algorithm}
We use the following LP for $\SDTSP$. We note that the LP does not exactly fit the LP in the general framework (LP \eqref{LP:general-single}) since there is a different constraint on the degree of vertex $v_1$. We still use the  general framework in this section, but we show that each part of the framework still holds with the additional degree constraint. 
\begin{empheq}{align} \label{LP: 1DTSP}
\text{minimize   } &\sum_{e \in E}c_ex_e \\
\text{s.t.   } & x(\delta(v)) = 2 &\forall v \in V - v_1 \nonumber \\
& x(\delta(v_1)) = 2k \nonumber \\
 & x(\delta(S)) \geq 2 &\forall S \subseteq V \nonumber \\
 & x_e \geq 0 & \forall e \in E. \nonumber
\end{empheq}

We first show the cost of $T^*$ is at most the cost of the LP relaxation for the problem. 
\begin{lemma}
    Let $x^*$ be an optimal solution to LP \eqref{LP: 1DTSP}. Then we have, 
    \[ c(T^*) \leq c^Tx^*. \]
\end{lemma}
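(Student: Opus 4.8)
The plan is to show that the optimal LP solution $x^*$ of LP~\eqref{LP: 1DTSP} is \emph{itself} a feasible point of the connectivity LP~\eqref{LP:tree}, whose optimum value equals $c(T^*)$ by Claim~\ref{claim:LP:tree}. Since feasibility (rather than optimality) of $x^*$ for LP~\eqref{LP:tree} already forces the optimum of that program to be at most $c^T x^*$, the desired bound $c(T^*) \le c^T x^*$ follows at once. So the whole argument reduces to verifying that $x^*$ satisfies every constraint of LP~\eqref{LP:tree}.

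First I would dispatch the easy constraints. The degree requirement $x^*(\delta(v_1)) = 2k$ is literally one of the constraints of LP~\eqref{LP: 1DTSP}, and $x^* \ge 0$ holds as well. Hence the entire content is to verify the family of connectivity inequalities $x^*(S) \ge \kappa(\overline{S}) - 1$ for all $S \subseteq E$.

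The key step, and the place where the real work lives, is to reduce each edge-set constraint to a vertex-partition statement and then invoke the subtour cut constraints of LP~\eqref{LP: 1DTSP}. Fix $S \subseteq E$ and let $C_1,\dots,C_p$ be the connected components of $(V, E \setminus S)$, so $p = \kappa(\overline{S})$. If $p = 1$ the inequality is trivial because $x^* \ge 0$, so assume $p \ge 2$; then each $C_i$ is a nonempty proper subset of $V$. Any edge joining two distinct components cannot belong to $E \setminus S$ (otherwise those components would be merged), so it belongs to $S$; thus $S$ contains every edge crossing the partition $\mathcal{P} = \{C_1,\dots,C_p\}$. Summing the subtour constraints $x^*(\delta(C_i)) \ge 2$ over $i$ and noting that each crossing edge is counted exactly twice gives $x^*(\text{crossing edges of } \mathcal{P}) = \tfrac12 \sum_{i} x^*(\delta(C_i)) \ge p$, and therefore $x^*(S) \ge x^*(\text{crossing edges of } \mathcal{P}) \ge p \ge p - 1 = \kappa(\overline{S}) - 1$. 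This establishes feasibility, and Claim~\ref{claim:LP:tree} then yields that $c(T^*)$ equals the optimum of LP~\eqref{LP:tree}, which is at most $c^T x^*$.

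The subtlety I expect to be the main obstacle is choosing the correct tree relaxation. The alternative characterization in LP~\eqref{LP:tree-unrestricted-arbit} carries the upper bounds $z_e \le 1$, which $x^*$ can violate: an edge $\{v_1, u\}$ may carry $x^*$-value as large as $2$, so the naive ``$x^*$ lies in the spanning-tree polytope'' argument fails against that formulation. It is precisely the \emph{up-closed} connectivity formulation LP~\eqref{LP:tree}, which imposes no upper bounds and only the lower connectivity inequalities, that makes $x^*$ directly feasible and lets the partition/counting argument go through cleanly.
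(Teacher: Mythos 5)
Your proposal is correct and follows essentially the same route as the paper: both show $x^*$ is feasible for the up-closed connectivity relaxation LP~\eqref{LP:tree} by summing the subtour constraints $x^*(\delta(C_i))\ge 2$ over the components of $(V,\overline{S})$ and halving, then invoke Claim~\ref{claim:LP:tree} to identify $c(T^*)$ with the optimum of LP~\eqref{LP:tree}. Your explicit handling of the single-component case $p=1$ (where the paper's chain of inequalities would nominally assert $x^*(S)\ge 1$) is a small but welcome extra care.
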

\begin{proof}
We show that any solution $x$ to LP \eqref{LP: 1DTSP} is feasible for LP \eqref{LP:tree} which will conclude the proof. The solution $x$ clearly satisfies $x(\delta(v_1)) = 2k$ and $0 \leq x$ by the constraints of LP \eqref{LP: 1DTSP}. Let $S \subset E$ and $C_1, \ldots, C_m$ be a partition of the vertex set $V$ in the graph $(V, \overline{S})$. Then we have that,
\begin{align*}
    x(S) &\geq \sum_{i < j}x(E(C_i, C_j)) = \frac{1}{2} \sum_{i = 1}^k x( \delta(C_i) ) \\
    &\geq m = \kappa(\overline{S}) >  \kappa(\overline{S}) - 1.
\end{align*}
The first inequality follows since $E(C_i, C_j) \subseteq S$ since $C_1, \ldots, C_k$ are components in the graph with edges $\overline{S}$ and the second inequality follows since $x$ is feasible for LP \eqref{LP: 1DTSP}.
\end{proof}

We can show the cost of the matching is a at most $1/2$ the cost of the LP optimum. 
\begin{claim}
    Let $x^*$ be an optimal solution to LP \eqref{LP: 1DTSP}. Then we have $c(M^*) \leq \frac{c^Tx^*}{2}$.
\end{claim}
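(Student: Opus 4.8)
show $c(M^*) \leq \frac{c^Tx^*}{2}$ where $M^*$ is a minimum cost perfect matching on the odd-degree vertices of $T^*$.

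The plan is to exhibit a feasible fractional point in the matching polytope (or a convex combination of matchings) on the odd-degree vertex set $O$ of $T^*$ whose cost is at most $\frac{1}{2}c^Tx^*$, and then invoke that the minimum cost matching $M^*$ costs no more than any such point. The natural candidate is the scaled LP solution $\frac{1}{2}x^*$ restricted appropriately to $O$. This mirrors the classical Wolsey / Shmoys--Williamson argument that the Christofides matching costs at most half the Held--Karp value.

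First I would recall the standard tool: for a metric cost function, if a vector $y \geq 0$ satisfies $y(\delta(S)) \geq 1$ for every $S$ that contains an odd number of vertices of $O$ (equivalently every $O$-odd cut), and $y(\delta(v)) \geq 1$ for $v \in O$, then by the characterization of the (perfect) matching polytope together with triangle inequality, the minimum cost perfect matching on $O$ has cost at most $c^Ty$. The cleanest route is to take $y = \frac{1}{2}x^*$ and verify that $y$ dominates a point in the convex hull of perfect matchings on $O$; since $c$ is a semi-metric one may shortcut, so it suffices that $y(\delta(S)) \geq 1$ for all sets $S$ with $|S \cap O|$ odd. Because $x^*$ satisfies $x^*(\delta(S)) \geq 2$ for all $S$, we immediately get $y(\delta(S)) = \frac{1}{2}x^*(\delta(S)) \geq 1$, which is exactly the odd-cut requirement. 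Thus $c(M^*) \leq c^T y = \frac{1}{2}c^Tx^*$.

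The main subtlety I anticipate is the vertex $v_1$, whose degree constraint is $x^*(\delta(v_1)) = 2k$ rather than $2$, so $v_1$ may or may not lie in $O$, and one must be careful that the odd-cut argument is applied to the correct vertex set $O$. The key point to check is that the cut lower bound $x^*(\delta(S)) \geq 2$ holds uniformly for \emph{all} $S \subseteq V$ in LP~\eqref{LP: 1DTSP}, so the scaled point $\frac{1}{2}x^*$ satisfies $y(\delta(S)) \geq 1$ for every $S$ regardless of whether $S$ separates an odd subset of $O$; this makes the parity of $v_1$ irrelevant to the cut constraints. I would therefore state the argument as: $\frac{1}{2}x^*$ is feasible for the LP whose optimum is the minimum fractional perfect matching on $O$, and by integrality of the matching polytope and triangle inequality the minimum cost perfect matching satisfies $c(M^*) \leq \frac{1}{2}c^Tx^*$, completing the proof.
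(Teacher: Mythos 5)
Your argument is correct and is essentially the paper's own proof: both show that $\frac{1}{2}x^*$ lies in the up-hull of the $O$-join (equivalently, via shortcutting under the triangle inequality, perfect-matching) polytope on the odd-degree vertices of $T^*$, using the fact that $x^*(\delta(S)) \geq 2$ holds for every cut so the odd-cut constraints are automatic. Your extra care about $v_1$ is not needed but does no harm, and the conclusion $c(M^*) \leq \frac{1}{2}c^Tx^*$ follows exactly as in the paper.
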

\begin{proof}
    Let $S$ be the set of odd degree vertices in $T*$ then $M^*$ is a min-cost $S$-join in $G$. The polytope for $S$-joins is given by $\{x \geq 0 | x(\delta(P)) \geq 1, \forall P \text{ such that } |P \cap S| \text{ is odd} \}$. Then the claim follows since $x^*/2$ is a feasible solution for the $S$-join polytope.
\end{proof}

Then the above two lemmas imply the following.
\begin{lemma} \label{lemma:friezeLP}
    Let $x^*$ be an optimal solution to LP \eqref{LP: 1DTSP} Algorithm \eqref{Algorithm:TSP_single} returns a solution $C$ satisfying $\sum_{e \in C}c_e \leq \frac{3}{2}c^Tx^*$.
\end{lemma}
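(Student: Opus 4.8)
The plan is to assemble Lemma~\ref{lemma:friezeLP} directly from the two preceding results about the individual pieces of Algorithm~\eqref{Algorithm:TSP_single}. The two lemmas already established give $c(T^*) \leq c^T x^*$ for the min-cost degree-constrained spanning tree, and $c(M^*) \leq \frac{1}{2} c^T x^*$ for the min-cost matching on the odd-degree vertices of $T^*$. Since the algorithm's output is obtained by adding $M^*$ to $T^*$ and then shortcutting the resulting Eulerian walk, the only remaining ingredient is to argue that shortcutting does not increase cost, which follows from the triangle inequality satisfied by the semi-metric $c$.

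First I would observe that $T^* \cup M^*$ forms a connected Eulerian multigraph: $T^*$ is a spanning tree (hence connected and spanning), and adding the matching $M^*$ on exactly the odd-degree vertices of $T^*$ makes every vertex have even degree. In particular $v_1$ has degree $2k$ in $T^*$, and since the matching only touches odd-degree vertices, the number of edges incident to $v_1$ remains even, so the Eulerian structure decomposes into closed walks through $v_1$ that yield the desired $k$ cycles after the deletion rules in Step~3. I would then note that by the triangle inequality, each shortcutting step (deleting a repeated vertex $w_i \neq v_1$, or deleting a neighbor of $v_1$ that is not adjacent to $v_1$ in the walk) replaces a sub-path by a single edge of no greater cost, so the cost of the final cycles $C$ satisfies $\sum_{e \in C} c_e \leq c(T^*) + c(M^*)$.

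Combining the bounds then yields
\[
\sum_{e \in C} c_e \;\leq\; c(T^*) + c(M^*) \;\leq\; c^T x^* + \tfrac{1}{2} c^T x^* \;=\; \tfrac{3}{2} c^T x^*,
\]
which is exactly the claimed guarantee.

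The main obstacle, and the only place requiring genuine care rather than routine bookkeeping, is verifying that Step~3's specific deletion rules both produce a \emph{feasible} output (exactly $k$ cycles each containing $v_1$, with every non-$v_1$ vertex visited exactly once) and preserve the cost bound under the semi-metric. The subtlety is that $v_1$ must be retained at each of its $k$ appearances so the walk splits into $k$ cycles, while the rule deleting neighbors of $v_1$ ensures the degree at $v_1$ does not inflate above $2k$ during shortcutting; one has to check that these rules are consistent and that each deletion corresponds to a valid triangle-inequality shortcut. I would handle this by tracking how the Eulerian tour $w_1 = v_1, \ldots, w_s = v_1$ visits $v_1$ exactly $k$ times (matching its degree $2k$) and arguing that the two deletion conditions exactly implement the standard Christofides–Serdyukov shortcutting while respecting the depot constraint, as in Frieze's analysis.
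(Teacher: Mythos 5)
Your proposal matches the paper's proof exactly: the paper likewise combines the bound $c(T^*) \leq c^T x^*$ with $c(M^*) \leq \frac{1}{2}c^T x^*$ and invokes the triangle inequality to conclude $\sum_{e \in C} c_e \leq c(T^*) + c(M^*) \leq \frac{3}{2}c^T x^*$. Your additional care about verifying that Step~3's deletion rules yield a feasible set of $k$ cycles is a reasonable elaboration of a point the paper leaves implicit (deferring to Frieze's analysis), but it is not a different argument.
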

\begin{proof}
    This follows since by the triangle inequality $\sum_{e \in C}c_e \leq c(M^*) + c(T^*) \leq \frac{3}{2}c^Tx^*$.
\end{proof}

Now we are ready to get a $3/2$ algorithm for the multi-visit variant. We need the following to characterize solutions to the problem.

\begin{lemma}
    Given a connected graph with edge set $T$ such that $d_T(v_1) = 2k$ and $d_T(v) = 2r(v)$, we can can decompose the edges of $T$ into $k$ closed walks containing $v_1$.
\end{lemma}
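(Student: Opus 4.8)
The plan is to prove that any connected graph with edge set $T$ satisfying $d_T(v_1) = 2k$ and $d_T(v) = 2r(v)$ for all $v \neq v_1$ can be decomposed into exactly $k$ closed walks, each containing $v_1$. The key observation is that every vertex of $(V, T)$ has even degree, so the graph is Eulerian (it is connected by hypothesis). Thus $T$ admits a single closed Eulerian walk, but I want $k$ separate closed walks through $v_1$, so the work is in splitting the Euler tour at $v_1$ appropriately.

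First I would invoke the classical result that a connected graph in which every vertex has even degree decomposes into edge-disjoint cycles, or equivalently has an Eulerian circuit. Since $d_T(v) = 2r(v)$ is even for all $v \neq v_1$ and $d_T(v_1) = 2k$ is even, the connectivity hypothesis guarantees an Eulerian circuit $W$ starting and ending at $v_1$. The circuit $W$ is itself a single closed walk using every edge of $T$ exactly once. Now I would use the edges incident to $v_1$: since $d_T(v_1) = 2k$, the Eulerian circuit $W$ visits $v_1$ exactly $k$ times (counting the start/end as one visit), because each visit to $v_1$ other than possibly the endpoints consumes two incident edges and the total degree is $2k$. More carefully, traversing $W$ and marking each return to $v_1$ partitions $W$ into exactly $k$ consecutive closed sub-walks $W_1, \ldots, W_k$, where each $W_i$ starts and ends at $v_1$ and these sub-walks are edge-disjoint and together use all edges of $T$.

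The main step is therefore to argue the count is exactly $k$ and that each piece is a genuine closed walk containing $v_1$. I would formalize this by walking along $W$ from $v_1$ and cutting the walk each time it returns to $v_1$; the number of cuts equals the number of times $v_1$ appears as an internal or terminal vertex, which is exactly $\frac{1}{2}d_T(v_1) = k$ since each appearance of $v_1$ in the closed walk $W$ accounts for exactly two edges of $\delta(v_1)$. Each resulting segment $W_i$ is a closed walk by construction (it begins and ends at $v_1$), and trivially contains $v_1$. The segments partition the edge set $T$ because $W$ used each edge exactly once and the cuts only subdivide the sequence without deleting edges.

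The step I expect to be the main obstacle is the careful bookkeeping of how many times $v_1$ is visited: one must ensure that the $2k$ edges incident to $v_1$ pair up correctly into $k$ arrivals-and-departures, handling the degenerate case where some $W_i$ may be a single loop at $v_1$ (which is still a valid closed walk containing $v_1$). This is a routine but slightly delicate parity argument; once it is established that $v_1$ occurs exactly $k$ times as a breakpoint in the Eulerian circuit, the decomposition into $k$ closed walks follows immediately. I would conclude by noting that this decomposition is constructive and can be carried out in polynomial time via Hierholzer's algorithm for finding the Eulerian circuit followed by splitting at occurrences of $v_1$.
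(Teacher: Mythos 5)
Your proof is correct and follows essentially the same route as the paper: both take an Eulerian circuit of $T$ starting at $v_1$ and split it at the returns to $v_1$, using $d_T(v_1)=2k$ to conclude there are exactly $k$ pieces. The only difference is presentational --- the paper peels off the first closed sub-walk and recurses (re-verifying the residual graph is Eulerian at each step), whereas you cut the circuit at all $k$ occurrences of $v_1$ in one pass via the degree-counting argument, which is if anything slightly cleaner.
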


\begin{proof}
    The graph $G$ is Eulerian so there exists an Eulerian $C$ circuit starting at $v_1$ and the circuit is given by a sequence of vertices $w_1 = v_1, \ldots, w_k = v_1$. Let $w_1, w_2, \ldots, w_j$ be a prefix of the sequence such that $j$ is the smallest index greater than $1$ such that $w_j = v_1$. We will use $w_1, \ldots, w_j$ as the first closed walk. Next we reduce the graph by deleting all edges used by the first closed walk and then by removing any isolated vertices. We now show this graph is still Eulerian. Clearly all vertices have even degree since we removed an even number of edges from each vertex. The graph remains connected since the Eulerian circuit $C$ will not use any of the removed vertices or edges in the graph. Thus we can inductively repeat this process to get $k$ closed walks containing $v_1$ so that each vertex $v$ is visited a total of $r(v)$ times. 
\end{proof}

We use following LP for the multi-visit version of the problem. 
    \begin{align} 
        \text{minimize   } &\sum_{e \in E}c_ex_e \label{LP:oneDepot} \\
        \text{s.t.   } & x(\delta(v_1)) = 2k   & \nonumber  \\
        & x(\delta(v)) = 2r(v) & \forall v \in V \nonumber \\
        & x(\delta(S)) \geq 2 & \forall S \subseteq V \nonumber \\
         & x_e \geq 0 & \forall e \in E \nonumber.
    \end{align}

The proof of this is nearly identical to Lemma \ref{lemma:general-smallR}.
\begin{claim} \label{claim:smallR-1d}
    Let $\mathcal{A}$ be an algorithm that takes a solution $z$ to LP \ref{LP:oneDepot} and outputs $k$ cycles $t_1, \ldots, t_k$ satisfying $\sum_{i = 1}^kc(t_i) \leq \rho c^Tz$. Then given a solution $x$ to LP \eqref{LP: 1DTSP} there is an algorithm that outputs a feasible solution to $\SDMVTSP$, $T: E \to \mathbb{Z}$ satisfying $\sum_{e \in E}c_e T(e) \leq \rho c^Tx$ in time polynomial in $\max_{v \neq v_1}r(v)$ and $n$.
\end{claim}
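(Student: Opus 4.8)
The plan is to mimic the structure of Lemma~\ref{lemma:general-smallR} almost verbatim, with the two cosmetic differences being the special depot vertex $v_1$ (which carries the degree constraint $x(\delta(v_1)) = 2k$ rather than $2r(v_1)$) and the fact that the single-visit LP here is LP~\eqref{LP: 1DTSP} rather than the generic LP~\eqref{LP:general-single}. First I would invoke the blow-up construction analogous to Lemma~\ref{lemma:solution-construction}: build the graph $G^r$ that contains $r(v)$ copies of each non-depot vertex $v \neq v_1$ and a single copy of $v_1$, extend the cost function so that every edge $\{u_i, v_j\}$ inherits the cost $c_{uv}$ of the originating edge, and spread the degree of each vertex uniformly across its copies by setting $z_{\{u_i,v_j\}} = x_{\{u,v\}} / (r(u)r(v))$ (treating $r(v_1) = 1$). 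The cost is preserved because $\sum_{e' \in E^r} c_{e'} z_{e'} = \sum_{e=\{u,v\}} c_e\, r(u)r(v)\, z_e = \sum_e c_e x_e$, exactly as in Lemma~\ref{lemma:solution-construction}.

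The key point to verify is that this $z$ is feasible for LP~\eqref{LP: 1DTSP} on $G^r$. The degree constraints are immediate: each copy $v_i$ of a non-depot vertex gets $z(\delta(v_i)) = x(\delta(v))/r(v) = 2$, and the single depot copy keeps $z(\delta(v_1)) = x(\delta(v_1)) = 2k$, matching the distinguished degree constraint of LP~\eqref{LP: 1DTSP}. For the cut constraints $z(\delta(S)) \ge 2$ for all $S \subseteq V^r$, I would reuse the induction argument from Lemma~\ref{lemma:solution-construction} on the number $k$ of original vertices that are ``split'' by $S$ (i.e.\ vertices having copies on both sides of the cut); the argument carries over unchanged because it only relies on the uniform splitting of degree and the feasibility of the original cut constraints $x(\delta(S)) \ge 2$, both of which hold here. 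The depot simply behaves as an unsplit singleton vertex throughout, so no new cases arise.

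With feasibility established, the remaining steps are routine. I would run the algorithm $\mathcal{A}$ on the solution $z$ to LP~\eqref{LP:oneDepot} interpreted on $G^r$, obtaining $k$ cycles whose total cost is at most $\rho\, c^T z = \rho\, c^T x$ by the cost-preservation identity. Contracting the copies back (replacing every edge of $G^r$ by its originating edge of $G$) yields a multigraph solution $T: E \to \mathbb{Z}$ in which each non-depot vertex $v$ has degree exactly $2r(v)$ and $v_1$ has degree $2k$; triangle inequality guarantees the shortcutting in $\mathcal{A}$ does not increase cost. The running time is polynomial in $n$ and $\max_{v \neq v_1} r(v)$ since $G^r$ has at most $n \max_v r(v)$ vertices and $\mathcal{A}$ runs in polynomial time. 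I anticipate the main obstacle is purely bookkeeping: confirming that the distinguished depot degree constraint $2k$ (rather than $2r(v_1)$) does not break the inductive cut argument inherited from Lemma~\ref{lemma:solution-construction}, which it does not, since the depot is never split and its cut contributions are handled identically to an ordinary unsplit vertex.
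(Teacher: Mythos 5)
Your proposal is correct and matches the paper's proof essentially step for step: the paper likewise builds $G^r$ with $r(v)$ copies of each non-depot vertex and a single copy of $v_1$, sets $z_{\{u_i,v_j\}} = x_{uv}/(r(u)r(v))$ with $r(v_1)=1$, and invokes Lemma~\ref{lemma:solution-construction} with $D=\{v_1\}$ to get cost preservation and feasibility (handling the cut constraints for sets not containing $v_1$ by complementation, just as you note the depot is never split). No substantive differences.
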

\begin{proof}
    We follow the proof of Lemma \ref{lemma:general-smallR}. We construct the graph $G^r$ identically as in Lemma \ref{lemma:general-smallR} for the non-depot vertices meaning $G^r$ has $r(v)$ copies of each vertex $v \in V - v_1$ and for $v_1$ the graph $G^r$ has only one copy. We also extend the cost function $c$ where an edge $\{u_i, v_j\}$ has cost $c_e$ where $e = \{u, v\}$ is the edge with between the original vertices that $u_i, v_j$ are copies of. For each edge $e = \{u_i, v_j\} \in E^r$ where $u_i, v_j$ are copies of $u, v \in V$ we construct a solution $z_e = \frac{x_{uv}}{r(u)r(v)}$ where in this context we set $r(v_1) = 1$. 
    
    We show that $z$ is a feasible solution to LP \ref{LP:oneDepot} for graph $G^r$ and satisfies $c^Tx = c^Tz$ which follows from Lemma \ref{lemma:solution-construction} with one slight addition. We use Lemma \ref{lemma:solution-construction} by setting $D = \{v_1\} $ and observe that the graph $G^r$ and cost function $c$ in the Lemma \ref{lemma:solution-construction} are the same as $G^r, c$ given in this lemma. Then in Lemma \ref{lemma:solution-construction} we showed that $z(\delta(v')) = \frac{x(\delta(v))}{r(v)}$ where $v' \in V^r$ is a copy of vertex $v \in V$ which implies for $v \neq v_1$ that $z(\delta(v)) = 2$ and $z(\delta(v_1)) = 2k$. Finally in Lemma \ref{lemma:solution-construction} we showed that $z(\delta(S \cup d)) \geq 2$ for all $S \subset V^r - \{d\}$ which is equivalent to $z(\delta(T)) \geq 2$ for all $T \subset V^r$ since if $d \notin T$ then we have that $d \in V^r - T$ implying $z(\delta(T)) = z(\delta(V^r - T)) \geq 2$. 

    Thus by applying algorithm $\mathcal{A}$ to $G^r$ we get $k$ cycles containing $v_1$ that visit all vertices $v \neq v_1$ once. These $k$ cycles correspond to $k$ closed walks $t_1, \ldots, t_k$ in $G$ starting at $v_1$ that together satisfy the visit requirements for all vertices. We also have that $\sum_{i = 1}^kc(t_i) \leq \rho c^Tx$ since the closed walks in $c_1, \ldots, c_k$ have the same cost as the cycles in $G^r$ and the cycles in $G^r$ have cost at most $c^Tz = c^Tx$. The run-time follows since the maximum number of vertices in $G^r$ is $n \max_{v \neq v_1}r(v)$.
\end{proof}

The following algorithm is nearly identical to Algorithm \eqref{Algorithm:general} when we set $D = \{v_1\}$.
\begin{algorithm}[H]
\caption{Multi-Visit TSP Single Depot Arbitrary}
\DontPrintSemicolon
\label{Algorithm:TSP_single-mv}
\KwInput{$G=(V = \{v_1, \ldots, v_n\}), c: V \times V \to \mathbb{R}_{\geq 0}, m \in \mathbb{N}, r: V - v_1 \to \mathbb{Z}$}
\setcounter{AlgoLine}{0}
\KwOutput{$k$ tours that contain $v_1$ such that vertices $v \neq v_1$  are visited $r(v)$ times}

Solve LP \eqref{LP:oneDepot} to get solution $x^*$ .\;

For all edges $e$ let $\tilde{x}_e = x_e - 2k_e$ such that $k_e = 0$ if $x_e \leq 4$ and otherwise $k_e$ is set so that $2 \leq \tilde{x}_e < 4$ and $k_e \in \mathbb{Z}$. Define a function $\tilde{r}: V - v_1 \to \mathbb{Z}$ where $\tilde{r}(v) = r(v) - \sum_{e \in \delta(v)}k_e$ and $\tilde{k} = \frac{1}{2}\tilde{x}(\delta(v_1))$.\;

Use Claim \ref{claim:smallR-1d} with solution $\tilde{x}$ on instance $G, \tilde{r}, \tilde{k}$ .\;

Increase the number of times each edge is used in the previous step by $2k_e$ and return the resulting solution.\;
\end{algorithm}

As in the previous section we show the following lemmas and claims to show this algorithm gets a $\rho$ approximation.

The proof of this claim is identical to Claim \ref{claim:rbound}.
\begin{claim}
    For all $v \in V - v_1$ we have $1 \leq \tilde{r}(v) \leq 2n$ and $\tilde{k} \geq 1$.
\end{claim}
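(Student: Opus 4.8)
The plan is to mirror the argument of Claim~\ref{claim:rbound}, since the final statement is explicitly flagged as being ``identical'' to it, while carefully handling the two quantities involved: the per-vertex visit bound $1 \leq \tilde{r}(v) \leq 2n$ for $v \neq v_1$, and the new salesperson count $\tilde{k} \geq 1$. The common starting point for both is the degree-rewriting identity established in Claim~\ref{claim:rbound}. First I would record that for each $v \in V - v_1$,
\begin{align*}
\tilde{r}(v) &= r(v) - \sum_{e \in \delta(v)} k_e = \frac{1}{2}\sum_{e \in \delta(v)} \tilde{x}_e,
\end{align*}
using $k_e = (x_e - \tilde{x}_e)/2$ together with the degree constraint $x(\delta(v)) = 2r(v)$ from LP~\eqref{LP:oneDepot}. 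An analogous identity holds for the depot: since $\tilde{k}$ is \emph{defined} in Step~2 of Algorithm~\eqref{Algorithm:TSP_single-mv} as $\tilde{k} = \tfrac{1}{2}\tilde{x}(\delta(v_1))$, I already have $\tilde{k} = \frac{1}{2}\sum_{e \in \delta(v_1)} \tilde{x}_e$ directly from that definition.

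The bounds then follow by the same case split used in Claim~\ref{claim:rbound}, applied separately at each vertex. For $v \in V - v_1$: if every $e \in \delta(v)$ has $x_e \leq 4$ then $k_e = 0$ on all of $\delta(v)$, so $\tilde{r}(v) = r(v) \geq 1$ by the problem's assumption that visit requests are positive integers; otherwise some $e \in \delta(v)$ has $x_e > 4$, and for that edge the choice of $k_e$ forces $\tilde{x}_e \geq 2$, giving $\tilde{r}(v) = \frac{1}{2}\sum_{e \in \delta(v)} \tilde{x}_e \geq 1$. The upper bound $\tilde{r}(v) \leq 2n$ follows because $\tilde{x}_e < 4$ for every edge and $|\delta(v)| = n - 1$, so $\tilde{r}(v) = \frac{1}{2}\sum_{e \in \delta(v)} \tilde{x}_e < \frac{1}{2}\cdot 4 (n-1) = 2(n-1) \leq 2n$.

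For the depot bound $\tilde{k} \geq 1$, the only subtlety is that there is no positivity hypothesis on $k$ analogous to $r(v) \geq 1$ to fall back on in the all-small case, so I need a lower bound coming purely from feasibility. The cleanest route is to use the cut constraint: take any $S$ with $v_1 \notin S$ (equivalently, use $\delta(v_1)$ itself as a cut), so that $\tilde{x}(\delta(v_1)) \geq 2$ holds by Claim~\ref{claim:general-feasible} (feasibility of $\tilde{x}$, which guarantees $\tilde{x}(\delta(S)) \geq 2$ for all cuts). Then $\tilde{k} = \frac{1}{2}\tilde{x}(\delta(v_1)) \geq 1$ immediately. I expect this to be the main (though still minor) obstacle: unlike the $v \neq v_1$ case, the lower bound on $\tilde{k}$ cannot simply be inherited from an input assumption and instead must be extracted from the cut feasibility of the reduced solution $\tilde{x}$, which is precisely what the general-framework feasibility claim provides once we specialize $D = \{v_1\}$.
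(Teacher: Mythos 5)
Your proof is correct, and the $\tilde{r}(v)$ part is exactly the paper's argument (the paper simply cites the proof of Claim~\ref{claim:rbound} verbatim). For $\tilde{k}\geq 1$ you take a mildly different route from the paper: the paper does the same two-way case split directly on $\delta(v_1)$, concluding $\tilde{k}=k\geq 1$ when every $x_e\leq 4$ on $\delta(v_1)$ (so nothing is subtracted there) and $\tilde{k}\geq \tilde{x}_e/2\geq 1$ when some $x_e>4$, whereas you instead invoke the cut-feasibility of $\tilde{x}$, i.e.\ $\tilde{x}(\delta(v_1))\geq 2$, and divide by two. Both work, and there is no circularity in your version since the cut-constraint part of Claim~\ref{claim:general-feasible} is proved by exactly that case split and does not depend on the present claim. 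One small factual correction: your remark that ``there is no positivity hypothesis on $k$ to fall back on'' is not right --- the problem statement gives $1\leq k\leq n-1$ as part of the input, and the paper's all-small case leans on precisely that ($\tilde{k}=k\geq 1$). So the detour through cut feasibility is a valid alternative but not a necessity; note also that in the all-small case your own route still ends up using $\tilde{x}(\delta(v_1))=x(\delta(v_1))=2k$ or, equivalently, the LP cut constraint at $S=\{v_1\}$, so the two arguments are nearly the same computation packaged differently. Your upper bound $\tilde{r}(v)<2(n-1)\leq 2n$ is slightly sharper than what is claimed and is fine.
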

\begin{proof}
    The proof of Claim \ref{claim:rbound} shows $1 \leq \tilde{r}(v) \leq 2n$ for all $v \in V - v_1$. Now we show $\tilde{k} \geq 1$.  If we have that if $x_e \leq 4$ for all $e \in \delta(v_1)$ then $\tilde{k} = k \geq 1$ otherwise if there exists $e$ such that $x_e > 4$ then $\tilde{k} = \frac{1}{2}\tilde{x}(\delta(v_1)) \geq \frac{\tilde{x}_e}{2} > 2$.
\end{proof}

\begin{claim}
    The solution $\tilde{x}$ is a feasible solution for LP \eqref{LP:oneDepot} with graph $G$ and $\tilde{r},\tilde{k}$.
\end{claim}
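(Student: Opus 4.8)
The plan is to verify that $\tilde{x}$ satisfies each of the four constraint families of LP \eqref{LP:oneDepot} with the modified parameters $\tilde{r}, \tilde{k}$, mirroring the structure of Claim \ref{claim:general-feasible} but taking care of the special degree constraint at the depot $v_1$. First I would check the depot degree constraint: by the very definition of $\tilde{k} = \frac{1}{2}\tilde{x}(\delta(v_1))$ in Step 2 of Algorithm \eqref{Algorithm:TSP_single-mv}, we immediately have $\tilde{x}(\delta(v_1)) = 2\tilde{k}$, so this constraint holds essentially by construction. Next, for the non-depot degree constraints, I would reuse the computation already carried out in Claim \ref{claim:general-feasible} and Claim \ref{claim:rbound-tsp}: namely $\tilde{x}(\delta(v)) = \sum_{e \in \delta(v)}(x_e - 2k_e) = 2(r(v) - \sum_{e \in \delta(v)}k_e) = 2\tilde{r}(v)$ for all $v \in V - v_1$, which is exactly the required degree constraint.

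The remaining work is to establish the cut constraints $\tilde{x}(\delta(S)) \geq 2$ for all $S \subseteq V$ and nonnegativity $\tilde{x} \geq 0$. The nonnegativity follows from the choice of $k_e$ in Step 2, which ensures $\tilde{x}_e \geq 0$ (either $\tilde{x}_e = x_e \geq 0$ when $x_e \leq 4$, or $\tilde{x}_e \geq 2$ when $x_e > 4$). For the cut constraints, I would argue exactly as in Claim \ref{claim:general-feasible}: fix any $S \subseteq V$; if every edge $e \in \delta(S)$ satisfies $x_e \leq 4$ then $k_e = 0$ on those edges, so $\tilde{x}(\delta(S)) = x(\delta(S)) \geq 2$ by feasibility of $x^*$ for LP \eqref{LP:oneDepot}; otherwise there is some edge $e \in \delta(S)$ with $x_e > 4$, and then $\tilde{x}(\delta(S)) \geq \tilde{x}_e \geq 2$ by the definition of $\tilde{x}$.

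The one point that requires genuine attention, rather than a direct copy of the earlier claim, is the interaction with the depot. In the general framework of Claim \ref{claim:general-feasible}, the parameter analogous to $k$ does not appear as a free variable that is re-derived; here the reduced number of salespersons $\tilde{k}$ is itself defined from $\tilde{x}$, so I must confirm that the degree constraint at $v_1$ is consistent and that $\tilde{k}$ remains a valid (positive integer, bounded) count of tours. The positivity $\tilde{k} \geq 1$ is already handled by the preceding claim, and integrality of $\tilde{k}$ follows because $k_e \in \mathbb{Z}$ for every edge, so $\tilde{k} = k - \sum_{e \in \delta(v_1)}k_e$ is an integer. I do not expect any serious obstacle: the proof is a routine constraint-by-constraint verification, and the main subtlety is simply bookkeeping to ensure the depot degree constraint and the definition of $\tilde{k}$ are mutually consistent rather than imposing an extra condition.
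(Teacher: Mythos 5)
Your proposal is correct and matches the paper's proof: the paper likewise observes that $\tilde{x}(\delta(v_1)) = 2\tilde{k}$ holds by the definition of $\tilde{k}$ and defers the remaining degree, cut, and nonnegativity constraints to the argument of Claim \ref{claim:general-feasible}. Your additional remarks on the integrality and positivity of $\tilde{k}$ are consistent with the preceding claim in the paper and do not change the substance.
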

\begin{proof}
    By definition we have $\tilde{x}(\delta(v_1)) = 2\tilde{k}$ and the rest claim follows from the proof of Claim \ref{claim:general-feasible}.
\end{proof}

Then we get the following theorem whose proof is identical to the proof of Theorem \ref{theorem:general}.  
\begin{theorem}
    Let $x^*$ be an optimal solution to LP \eqref{LP:oneDepot} and $\rho$ be the approximation factor of algorithm $\mathcal{A}$ for $\SDTSP$ whose guarantee is relative to the value of LP \eqref{LP: 1DTSP}. Then Algorithm \eqref{Algorithm:TSP_single-mv} outputs a solution to $\SDMVTSP$, $T: E \to \mathbb{Z}$ satisfying $\sum_{e\in E}T(e) \leq \rho c^Tx^*$ and runs in time polynomial in $n$. 
\end{theorem}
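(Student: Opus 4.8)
The plan is to mirror the structure already established for the general framework in Theorem~\ref{theorem:general}, since Algorithm~\eqref{Algorithm:TSP_single-mv} is explicitly stated to be nearly identical to Algorithm~\eqref{Algorithm:general} with $D = \{v_1\}$. First I would verify that the reduced instance produced in Step~2 is well-defined and polynomially bounded: by the preceding claim, $1 \leq \tilde{r}(v) \leq 2n$ for all $v \neq v_1$ and $\tilde{k} \geq 1$, so the visit requests passed to Claim~\ref{claim:smallR-1d} are all polynomial in $n$. This is exactly the hypothesis that Claim~\ref{claim:smallR-1d} requires, and it is what guarantees that Step~3 runs in time polynomial in $n$ rather than in $\max_v r(v)$.

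Next I would establish the cost bound. Let $T'$ denote the solution returned by Step~3 before the edges are augmented by $2k_e$ in Step~4, so $T = T' + 2k$ as integer edge-multiplicity vectors. Since $\tilde{x}$ is feasible for LP~\eqref{LP:oneDepot} on the instance $G, \tilde{r}, \tilde{k}$ (by the preceding feasibility claim) and has all reduced visit values bounded by $2n$, Claim~\ref{claim:smallR-1d} applied with the $\rho$-approximation algorithm $\mathcal{A}$ gives
\[
  \sum_{e \in E} T'(e) c_e \leq \rho \sum_{e \in E} c_e \tilde{x}_e.
\]
Writing $S = \{e \in E : k_e > 0\}$, I would then add the cost $2\sum_{e \in S} k_e c_e$ contributed by Step~4 to both sides and recall that by construction $\tilde{x}_e = x_e - 2k_e$, so $\sum_e c_e \tilde{x}_e + 2\sum_{e \in S} k_e c_e = \sum_e c_e x_e = c^T x^*$. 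Since $\rho \geq 1$, combining these yields
\[
  \sum_{e \in E} T(e) c_e = \sum_{e \in E} T'(e) c_e + 2\sum_{e \in S} k_e c_e \leq \rho \sum_{e \in E} c_e \tilde{x}_e + 2\rho \sum_{e \in S} k_e c_e = \rho\, c^T x^*,
\]
which is the claimed approximation guarantee.

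Finally I would confirm feasibility of $T$ for LP~\eqref{LP:oneDepot}. The degree constraints hold by construction: the solution $T'$ satisfies $d_{T'}(v) = 2\tilde{r}(v)$ and $d_{T'}(v_1) = 2\tilde{k}$, and augmenting each edge $e$ by $2k_e$ restores $d_T(v) = 2\tilde{r}(v) + \sum_{e \in \delta(v)} 2k_e = 2r(v)$ for $v \neq v_1$ and $d_T(v_1) = 2k$. The cut constraints $x(\delta(S)) \geq 2$ are inherited from $T'$, since $T(e) \geq T'(e)$ for every edge and $T'$ already satisfies them as the output of $\mathcal{A}$. The main subtlety—and the step I expect to require the most care—is that $\mathcal{A}$'s guarantee in Claim~\ref{claim:smallR-1d} is stated relative to LP~\eqref{LP: 1DTSP} on the original vertex set, whereas the reduction internally passes through the blown-up graph $G^r$ and the relaxation LP~\eqref{LP:oneDepot}; one must keep straight that $\tilde{x}$ feeds into Claim~\ref{claim:smallR-1d} as a solution to LP~\eqref{LP: 1DTSP} while the output is measured against LP~\eqref{LP:oneDepot}. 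Because all of this machinery is packaged inside Claim~\ref{claim:smallR-1d} and the two preceding claims, the proof reduces to citing them in sequence exactly as Theorem~\ref{theorem:general} cites Claim~\ref{claim:rbound}, Lemma~\ref{lemma:general-smallR}, and Claim~\ref{claim:general-feasible}, so I would simply state that the argument is identical to that of Theorem~\ref{theorem:general} with these substitutions.
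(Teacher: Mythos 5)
Your proposal is correct and follows essentially the same route as the paper: the paper's proof is a verbatim transcription of the proof of Theorem~\ref{theorem:general}, defining $T'$ as the pre-augmentation output, invoking the bound $\sum_e T'(e)c_e \leq \rho\sum_e c_e\tilde{x}_e$ via the reduced-instance claims, adding back $2\sum_{e}k_ec_e$ and using $\tilde{x}_e + 2k_e = x_e$ with $\rho \geq 1$, and arguing feasibility from $T(e)\geq T'(e)$ plus the degree constraints by construction. Your version is, if anything, slightly more explicit about where $\rho\geq 1$ is used and about the degree accounting for $v_1$.
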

This gives the following corollary which we get by using the analysis of the Frieze algorithm we showed at the beginning of the section. 
\begin{corollary}
    There is an approximation algorithm for the $\SDMVTSP$ problem with an approximation factor of $\frac{3}{2}$.
\end{corollary}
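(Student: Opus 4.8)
The plan is to obtain the corollary as a direct composition of the two ingredients already established in this section: the LP-based analysis of Frieze's algorithm (Lemma \ref{lemma:friezeLP}) and the general multi-visit reduction (the reduction theorem stated just above, realized by Algorithm \ref{Algorithm:TSP_single-mv} via Claim \ref{claim:smallR-1d}). Concretely, I would take the single-visit subroutine $\mathcal{A}$ for $\SDTSP$ to be Algorithm \ref{Algorithm:TSP_single}. By Lemma \ref{lemma:friezeLP}, that algorithm returns a feasible $\SDTSP$ tour $C$ with $\sum_{e\in C}c_e\le \tfrac{3}{2}\,c^{\top}x^*$, where $x^*$ is an optimal solution of LP \eqref{LP: 1DTSP}. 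This is exactly the hypothesis demanded by the reduction theorem, namely that the guarantee of $\mathcal{A}$ be stated relative to the value of LP \eqref{LP: 1DTSP} with $\rho=\tfrac{3}{2}$, rather than relative to the integral optimum.

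Next I would invoke the reduction theorem with $\rho=\tfrac{3}{2}$. It guarantees that Algorithm \ref{Algorithm:TSP_single-mv} produces a feasible $\SDMVTSP$ solution $T:E\to\mathbb{Z}$ satisfying $\sum_{e\in E}T(e)c_e\le \tfrac{3}{2}\,c^{\top}x^*$, where now $x^*$ is an optimal solution of the multi-visit relaxation LP \eqref{LP:oneDepot}, and that it runs in time polynomial in $n$. The polynomial running time is the genuine payoff of the edge-fixing step of Algorithm \ref{Algorithm:TSP_single-mv}: it drives the residual requirements $\tilde r(v)$ down to at most $2n$ while keeping $\tilde k\ge 1$, so that Claim \ref{claim:smallR-1d} is applied only to a copy graph $G^r$ of polynomial size rather than one of size $\max_v r(v)$.

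To close the argument I would observe that LP \eqref{LP:oneDepot} is a valid relaxation of $\SDMVTSP$: any feasible family of $k$ closed walks through $v_1$ meeting the visit demands yields a feasible point of the same cost, since the degree constraints hold as $x(\delta(v))=2r(v)$ and $x(\delta(v_1))=2k$, and every cut $\delta(S)$ is crossed at least twice because the walks are closed. Hence $c^{\top}x^*\le \mathrm{OPT}$, and chaining the two bounds yields a solution of cost at most $\tfrac{3}{2}\,\mathrm{OPT}$. The only step that needs real care — and the one I expect to be the main obstacle — is the recasting of Frieze's combinatorial $\tfrac{3}{2}$-guarantee as a bound against LP \eqref{LP: 1DTSP}; but this is precisely what the tree lower bound and the matching lower bound preceding Lemma \ref{lemma:friezeLP} supply, so once those are in place the corollary follows by composition with no further work.
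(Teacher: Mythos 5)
Your proposal is correct and follows essentially the same route as the paper: the corollary is obtained by instantiating the preceding reduction theorem with $\mathcal{A}$ being Frieze's algorithm (Algorithm \ref{Algorithm:TSP_single}) and $\rho=\tfrac{3}{2}$ as supplied by Lemma \ref{lemma:friezeLP}, then noting that LP \eqref{LP:oneDepot} lower-bounds the integral optimum. The only work specific to this corollary in the paper is exactly this composition, so no further argument is needed.
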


\subsection{Approximation Algorithm for for Vertex Disjoint Tours}
Here we show an algorithm for the vertex disjoint variant that achieves a $7/2$-approximation. We note that this result does not follow the general framework and follows from a simple use of the single visit algorithm. The idea for this algorithm is from \cite{many-visit} which is that we can first find a $\mtsp$ solution that visits all vertices once. Then we add loops to the different tours to satisfy the visit requirements and adding loops allows us to maintain the vertex disjoint property that all solutions to single visit variants have. 

\begin{algorithm}[H]
\caption{Multi-Visit TSP Single Depot Vertex Disjoint}
\DontPrintSemicolon
\label{Algorithm:TSP_single-mv-dis}
\KwInput{$G=(V = \{v_1, \ldots, v_n\}), c: V \times V \to \mathbb{R}_{\geq 0}, k \in \mathbb{N}, r: V - v_1 \to \mathbb{Z}$}
\setcounter{AlgoLine}{0}
\KwOutput{$k$ closed walks that contain $v_1$ such that vertices $v \neq v_1$  are visited $r(v)$ times, closed walks are disjoint outside of $v_1$}

Use Algorithm \eqref{Algorithm:TSP_single} to get $k$-cycles containing $v_1$ so that all vertices are visited once.\;

Add $r(v) - 1$ loops to all vertices $v \neq v_1$ to the solution from the previous step.\;

\end{algorithm}

We need the following claim.
\begin{claim} \label{claim:7/2-2}
    Let $\mathrm{OPT}$ be the value of the optimum solution. Then we have $\sum_{v \in V - v_1}r(v) c(vv) \leq 2 \mathrm{OPT}$.
\end{claim}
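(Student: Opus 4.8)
The plan is to charge the quantity $\sum_{v \in V - v_1} r(v)\, c_{vv}$ directly against the edge cost of a fixed optimal solution, using only the visit requirements and the triangle inequality. Fix an optimal solution and let $T^*(e)$ denote the number of times it traverses edge $e$, so that $\mathrm{OPT} = \sum_{e \in E} T^*(e)\, c_e$. The first observation I would make is that, since each non-depot vertex $v$ is visited exactly $r(v)$ times by the collection of closed walks, the number of edge-endpoints incident to $v$ among the traversed edges is exactly $2r(v)$ (a loop at $v$ counts as one visit and contributes two endpoints at $v$). Hence $r(v)$ equals half the number of endpoints at $v$.

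Next I would rewrite $\sum_v r(v)\, c_{vv} = \frac{1}{2}\sum_v c_{vv} \cdot (\text{number of traversed endpoints at } v)$ and regroup this double sum by edge. A non-loop traversal of $\{u,v\}$ contributes one endpoint to each of $u$ and $v$, hence a term $\frac{1}{2}(c_{uu} + c_{vv})$ per traversal; a loop traversal at $v$ contributes two endpoints at $v$, hence a term $c_{vv}$ per traversal.

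The heart of the argument is the triangle-inequality bound $c_{vv} \le 2 c_{uv}$ already recorded in the preliminaries. For a non-loop edge $e = \{u,v\}$ it gives $\frac{1}{2}(c_{uu} + c_{vv}) \le 2 c_{uv} = 2 c_e$, while a loop at $v$ contributes exactly $c_{vv} = c_e \le 2 c_e$. Multiplying each per-edge bound by $T^*(e)$ and summing over all traversed edges yields $\sum_{v \in V - v_1} r(v)\, c_{vv} \le 2 \sum_{e \in E} T^*(e)\, c_e = 2\,\mathrm{OPT}$, as desired.

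I expect the only delicate point to be the bookkeeping of multiplicities — ensuring that a loop is counted as two endpoints in the degree yet pays its cost only once, and that each non-loop edge is split evenly between its two endpoints — rather than any genuine mathematical difficulty; the triangle inequality does the real work. Note also that the argument never uses the vertex-disjointness of the optimal tours, so it applies to the optimum of this variant without modification.
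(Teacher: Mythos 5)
Your argument is correct and is essentially the paper's: both distribute each traversed edge's cost half to each endpoint, use the degree identity $\sum_{e \in \delta(v)} T(e) = 2r(v)$ (with loops counted twice), and close with the triangle-inequality bound $c_{vv} \leq 2c_{uv}$. The paper sums vertex by vertex, lower-bounding $\sum_{e \in \delta(v)} T(e)c_e \geq 2r(v)\min_{e \in \delta(v)} c_e \geq r(v)c_{vv}$, whereas you regroup the same double count by edge; the two readings are equivalent, and your bookkeeping for loops and for edges incident to $v_1$ (whose $v_1$-endpoint simply drops out of the left-hand side) is sound.
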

\begin{proof}
    Let $T: E \to \mathbb{Z}$ be an optimal solution. Then we have 
    \begin{align*}
        \mathrm{OPT} &= \frac{1}{2}\sum_{e \in \delta(v_1)}c_eT(e) + \frac{1}{2}\sum_{v \in V - v_1}\sum_{e \in \delta(v)}T(e)c_e \\
        &\geq \frac{1}{2}\sum_{e \in \delta(v_1)}c_eT(e) + \frac{1}{2}\sum_{v \in V - v_1}2r(v)\min_{e \in \delta(v)}c_e \\ 
        &\geq \frac{1}{2}\sum_{e \in \delta(v_1)}c_eT(e) + \frac{1}{2}\sum_{v \in V - v_1}r(v) c_{vv} \\ 
        &\geq \frac{1}{2}\sum_{v \in V - v_1}r(v) c_{vv}.
    \end{align*}
    The first inequality follows since for all $v \in V -v_1$ we have $\sum_{e \in \delta(v_1)}T(e) = 2r(v)$ and the second inequality follows by the triangle inequality since for any edge $c \in \delta(v)$ we have $c_{vv} \leq 2c_e$.
\end{proof}

Then showing the following claim will imply that we get a $7/2$-approximation. 
\begin{claim} \label{claim:7/2-1}
    Let $c_1, \ldots, c_k$ be the $k$ cycles returned in the first step of the algorithm. Then we have that $\sum_{i = 1}^kc(c_i) \leq \frac{3}{2} \mathrm{OPT}$.
\end{claim}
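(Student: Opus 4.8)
The plan is to bound the cost of the $k$ cycles produced in the first step (Algorithm~\eqref{Algorithm:TSP_single}) against the optimal value $\mathrm{OPT}$ of the vertex-disjoint $\SDMVTSP_+$ instance. The first step invokes the single-visit Frieze algorithm, whose LP-based guarantee is exactly Lemma~\ref{lemma:friezeLP}: the returned single-visit solution $C = c_1 \cup \dots \cup c_k$ satisfies $\sum_{i=1}^k c(c_i) \le \frac{3}{2} c^T x^*$, where $x^*$ is an optimal solution to LP~\eqref{LP: 1DTSP}. So it suffices to exhibit a feasible solution to LP~\eqref{LP: 1DTSP} whose cost is at most $\mathrm{OPT}$, since then $c^T x^* \le \mathrm{OPT}$ and the claim follows by chaining the two inequalities.

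First I would take an optimal vertex-disjoint solution $T:E\to\mathbb{Z}$ of value $\mathrm{OPT}$ and construct from it a fractional point $x$ feasible for LP~\eqref{LP: 1DTSP}. The natural idea is to ``project away'' the multiplicity at each non-depot vertex: since $T$ is vertex-disjoint, each non-depot vertex $v$ lies on exactly one closed walk, which visits $v$ a total of $r(v)$ times, and after removing the $r(v)-1$ loops the walk passes through $v$ using two ``through'' edges. I would define $x$ by keeping the non-loop structure of each walk and scaling the degree at each $v\neq v_1$ down to $2$, so that $x(\delta(v))=2$ for $v\neq v_1$ and $x(\delta(v_1))=2k$ (each of the $k$ walks touches $v_1$ twice). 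The cut constraints $x(\delta(S))\ge 2$ hold because each walk is connected and contains $v_1$, so any cut separating a nonempty $S\subseteq V-v_1$ from $v_1$ is crossed at least twice by the corresponding walk; by the triangle inequality, discarding loops and shortcutting only decreases cost, so $c^T x \le \mathrm{OPT}$.

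The main obstacle I anticipate is making the projection rigorous while respecting the triangle inequality, in particular ensuring that the cost of the shortcut single-visit structure is genuinely bounded by $\mathrm{OPT}$ rather than merely heuristically so. The cleanest route is probably not to build $x$ edge-by-edge but to argue directly: drop all loops from the optimal solution and shortcut each walk to a cycle through $v_1$ visiting its other vertices once; by triangle inequality this has cost at most the total non-loop cost of $\mathrm{OPT}$, and it is an integral feasible point of LP~\eqref{LP: 1DTSP}, giving $c^T x^* \le \mathrm{OPT}$. One subtlety to check is that each walk, after removing loops, still contains at least the depot and can be shortcut to a valid cycle (possibly degenerate if a walk services no distinct non-depot vertex, but such a walk contributes nonnegative cost and can be handled by the convention that its contribution is nonnegative). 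Once $c^T x^* \le \mathrm{OPT}$ is established, Lemma~\ref{lemma:friezeLP} immediately yields $\sum_{i=1}^k c(c_i) \le \frac{3}{2} c^T x^* \le \frac{3}{2}\,\mathrm{OPT}$, completing the proof; combined with Claim~\ref{claim:7/2-2} this accounts for the loop cost of at most $2\,\mathrm{OPT}$ added in the second step, yielding the overall $\frac{7}{2}$-approximation.
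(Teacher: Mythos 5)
Your proposal is correct and matches the paper's argument: both shortcut the optimal vertex-disjoint multi-visit solution (dropping repeated visits/loops) to obtain a single-visit solution of cost at most $\mathrm{OPT}$, then invoke the $\frac{3}{2}$ guarantee of Algorithm~\ref{Algorithm:TSP_single}. The only cosmetic difference is that you route the comparison through the LP value $c^Tx^*$ via Lemma~\ref{lemma:friezeLP}, while the paper compares directly to the integral shortcut solution.
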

\begin{proof}
    Let $p_1, \ldots, p_m$ be a optimal solution with value $\mathrm{OPT}$. We have that any two cycles $p_i, p_j$ only intersect at the depot vertex $v_1$ since we are in the vertex disjoint tours setting. For each $p_i$ we can shortcut to get cycle $r_i$, so that all vertices in $p_i$ are visited once and by the triangle inequality we have that $c(r_i) \leq c(p_i)$ implying $\sum_{i = 1}^kc(r_i) \leq \mathrm{OPT}$. Thus we get that $\sum_{i = 1}^kc(c_i) \leq \frac{3}{2} \sum_{i = 1}^kc(r_i) \leq \mathrm{OPT}$ where the first inequality follows since Algorithm \eqref{Algorithm:TSP_single} is a $3/2$-approximation.
\end{proof}
Thus the above two claims imply the following theorem.
\depotDisjoint
\begin{proof}
    Let $c_1, \ldots, c_k$ be the cycle acquired in the first step of the algorithm. The cycles $c_1, \ldots, c_k$ only intersect at the depot vertex $v_1$ since they are a feasible solution to $\SDTSP$. Adding the loops to the cycles keeps this property so Algorithm \ref{Algorithm:TSP_single-mv-dis} outputs a feasible solution. Finally, the cost of the solution is $\sum_{i = 1}^kc(c_i) + \sum_{v \in V-v_1}(r(v) - 1)c_{vv} \leq \frac{7}{2}\mathrm{OPT}$ where the last inequality follows by Claim \ref{claim:7/2-1} and Claim \ref{claim:7/2-2}. The run-time follows immediately since both steps of the algorithm are polynomial time.
\end{proof}

\section{Further Directions}
In this paper we gave a reduction from various multi-visit TSP problems and their respective single visit versions. Our reduction relies on the connection between the LP relaxations of multi-visit variants and their respective single visit variants. There are two open questions that follow naturally.

\textbf{Get a $3/2$ approximation for $\MVmTSP_0$ with arbitrary tours.} For the $\MVmTSP_0$ with arbitrary tours problem, we are given $k$ depots and the visit function $r$ and the goal is to find at most $k$ closed walks so that all non-depot vertices $v$ are visited $r(v)$ times and each closed walk contains exactly one depot. Very recently Deppert, Kaul, and Mnich \cite{VarDepot-3/2} showed the following LP for $\mtsp_0$ has an integrality gap of $2$ and gave a $3/2$-approximation for $\mtsp_0$
    \begin{align} 
        \text{minimize   } &\sum_{e \in E}c_ex_e \label{LP:depotSingleVisit-atmost} \\
        \text{s.t.   }  & x(\delta(v)) = 2 & \forall v \in V - D \nonumber \\ 
        & x(\delta(S \cup D)) \geq 2 & \forall S \subset V - D \nonumber \\
        & x(E(D, D)) = 0 & \nonumber \\
         &0 \leq x_e \leq 2 & \forall e \in E \nonumber.
    \end{align}
    This means we cannot apply our reduction to $\MVmTSP_0$ with arbitrary tours by using LP \eqref{LP:depotSingleVisit-atmost} for $\mtsp_0$. One direction is to get a reduction from  $\MVmTSP_0$ with arbitrary tours to $\mtsp_0$ that does not use LPs.
    
\textbf{Apply the reduction to the unrestricted MV-$\mathrm{mTSP_+}$ with arbitrary tours.} In Section \ref{Section:arbitrary-depot} we gave a $2$-approximation for the $\mathrm{mTSP_+}$ problem and the approximation factor was with respect to the value of LP \eqref{LP:unrestricted-arbit}. We recall that LP \eqref{LP:unrestricted-arbit} was not a LP relaxation where the characteristic vectors of integral solutions to the problem are feasible, but instead it was the convex hull of certain trees that all integral solutions contain. We are not able to apply the reduction described in Section \ref{Section:framework} as the LP does not follow the structure of the LP described in the general framework. In particular it is difficult to find a feasible solution $\tilde{x}$ for the reduced visit function $\tilde{r}$. Either finding a different LP relaxation or finding a different way to apply the reduction to LP \eqref{LP:unrestricted-arbit} would improve the approximation factor of unrestricted $\MVmTSP_+$ from $4$ to $2$.

\bibliography{main}
\bibliographystyle{unsrt}  

\appendix

\section{Missing Proofs From Section \ref{Section:Overview}} \label{Section:TSP-proofs}
\begin{proof}[Proof of Lemma \ref{lemma:smallR-tsp}]
    Let $G^r$ be a complete graph that has $r(v)$ copies of each vertex in the graph with edge set $E^r$ and vertex set $V^r$. We extend the cost function $c$ to $G^r$ and assign edge $\{u_i, v_j\} \in E^r$ cost $c_e$ where  $\{u_i, v_j\}$ is copy of edge $e = \{u, v\}$. We will convert the solution $y$ to a solution $x$ to LP \eqref{LP:tsp} on the graph $G^r$. Let $e' = \{u_i, v_j\} \in E^r$ and $e = \{u, v\}$ are the original copies of $u_i, v_j$ in $V$, then we set $x_{e} = \frac{y_e}{r(u)r(v)}$. For all $v \in V$. Now we show that $x$ is a feasible solution to LP \eqref{LP:tsp} for the graph $G^r$. 
    For any vertex $v \in V^r$ we have $x(\delta(v_i)) = \frac{y(\delta(v))}{r(v)} = 2$ where the first equality follows since the degree of each vertex $v$ in $y$ is distributed evenly among all $r(v)$ copies in $x$ the second equality follows by the feasibility of $y$. 
    
    For any $S \subset  V^r$, we need to show $x(\delta(S)) \geq 2$. Let $k$ be the number of vertices $v \in V$ such that there exists $v_i, v_j$ that are distinct copies of $v$ and $S$ contains exactly one of $v_i, v_j$. We show  $x(\delta(S)) \geq 2$ by induction on $k$. If $k = 0$, then $x(\delta(S)) = y(\delta(S'))$ where $S' \subseteq V$ is acquired by taking the original copy of each vertex $v_j$ from $S$ implying $x(\delta(S)) \geq  2$ since $y(\delta(S')) \geq 2$ since $y$ is feasible to LP \eqref{LP:tsp}. If $k > 0$, then there exists a vertex $v \in V$ such that both $S$ and $V^r - S$ have copies of $v$. Let $B = V^r - S$ $S(v) = S \cap \{v_1, \ldots, v_{r(v)}\}$ and $B(v) =  \{v_1, \ldots, v_{r(v)}\} - S(v)$. For some $v_i \in S(v)$ let $X_1 = x(E(v_i, S - S(v)))$ and $X_2 = x(E(v_i, V^r - S))$. Then we observe that $x(E(S(v) , S - S(v))) = X_1|S(v)|$, $x(E(S(v), B)) = X_2 |S(v)|$ which follows since all copies of vertex $v$ have the same set of neighbors and the same $x$ values assigned to their edges. WLOG, we may assume $X_1 \leq X_2$ otherwise replace $S$ with $B$ since $x(\delta(S)) = x(\delta(B))$. Then we have $x(\delta(S - S(v)))  = x(\delta(S)) + x(E(S(v),S - S(v) )) - x(E(S(v),B))   = x(\delta(S)) + |S(v)|(X_1 - X_2) \leq  x(\delta(S))$ and by induction we get $x(\delta(S - S(v)) \geq 2$ since $S - S(v)$ is a set with one less vertex $v$ that separates a pair of copies of $v$.

    We now show that $0 \leq x \leq 1$. Suppose $\exists e  = \{u, v\} \in E$ with $x_e > 1$. This implies that $x(E(u, V - \{u, v\})) < 1$ and $x(E(v, V - \{u, v\})) < 1$ since $2 = x(\delta(v)) = x(E(v, V - \{u, v\})) + x_e >  x(E(v, V - \{u, v\})) + 1$ and $2 = x(\delta(u)) = x(E(u, V - \{u, v\})) + x_e >  x(E(v, V - \{u, v\})) + 1$. Thus we get that $x(\delta(\{u, v\})) =  x(E(v, V - \{u, v\})) +  x(E(u, V - \{u, v\})) < 2$ which is a contradiction.
    Thus we can apply the algorithm from the lemma assumption on $x$ as a solution to $G^r$ to get a Hamiltonian cycle $C$ in $G'$ satisfying $\sum_{e \in C}c_e \leq \rho c^Tx$. We note that $c^Ty = c^Tx$ since $c^Tx = \sum_{e = \{u, v\} \in E}r(u)r(v)x_e c_e = \sum_{e \in E}c_ey_e $. We now convert $C$ to a closed walk in $G$ by replacing every copy edge $\{u_i, v_j\}$ with its corresponding original edge $\{u, v\}$ in $G$. Clearly, $T$ is a closed walk in $G$. Moreover, $T$ visits every vertex $r(v)$ times $C$ visits each copy of $v$ one time. Finally we the run-time follows since $G^r$ has at most $n \max_{v \in V}r(v)$ vertices and the algorithm in the lemma statement is a polynomial time algorithm. 
\end{proof}

\begin{proof}[Proof of Theorem \ref{theorem:mvMtsp}]
    Let $T'$ be the solution we get from the third step before we increase each edge by $2k_e$. By Claim \ref{claim:rbound-tsp} and Lemma \ref{lemma:smallR-tsp} we have that $\tilde{x}$ satisfies 
    $ \sum_{e \in E} T'(e)c_e \leq  \rho \sum_{e \in E}c_e \tilde{x}_e $. 
    Let $S = \{e \in E | k_e > 0  \}$ be the set of edges then we have that $\sum_{e \in E}T(e)c_e = \sum_{e \in E}T'(e)c_e  + 2\sum_{e \in S} k_e c_e \leq  \rho \sum_{e \in E}c_e \tilde{x}_e  + 2\sum_{e \in S} k_e c_e \leq \rho c^Tx^*$. For the run-time, we get have that Step $3$ runs in time polynomial in $n$ by  Claim \ref{claim:rbound} and Lemma \ref{lemma:general-smallR} and the rest of the steps are clearly polynomial time in $n$. Finally, $T$ is $T$ is a feasible integral solution to LP \eqref{LP:tsp-multi} since $T'$ satisfies the cut constraints which implies $T$ also satisfies the cut constraints since $T(e) \geq T'(e)$ for all $e$ and by definition $T$ satisfies degree constraints. 
\end{proof}

\end{document}